\def\BibTeX{{\rm B\kern-.05em{\sc i\kern-.025em b}\kern-.08em
    T\kern-.1667em\lower.7ex\hbox{E}\kern-.125emX}}
\newtheorem{theorem}{Theorem}
\renewcommand{\algorithmicrequire}{\textbf{Input:}}
\renewcommand{\algorithmicensure}{\textbf{Output:}}
\newcommand{\dpsub}{\tt DPSUB}
\newcommand{\dpsize}{\tt DPSIZE}
\newcommand{\dpccp}{\tt DPCCP}
\newcommand{\dpe}{\tt DPE}
\newcommand{\mpdp}{\tt MPDP}
\newcommand{\mpdptree}{\tt MPDP:Tree}
\newcommand{\pdp}{\tt PDP}
\newcommand{\lindp}{\tt LinDP}
\newcommand{\innerCounter}{\tt EvaluatedCounter}
\newcommand{\ccpCount}{\tt CCP-Counter}
\newcommand{\ccpPair}{\tt CCP-Pair}
\newcommand{\ccpPairs}{\tt CCP-Pairs}
\newcommand{\genCount}{\tt Generated Counter}
\newcommand{\joinpair}{\text{\sf Join-Pair}}
\newcommand{\forloop}{\text{\sf for-loop}}
\newcommand{\setS}{\mathcal{S}}
\newcommand{\joinpairs}{\text{\sf Join-Pairs}}
\newcommand{\postgres}{PostgreSQL}
\newcommand{\musicbrainz}{MusicBrainz}
\newtheoremstyle{mytheoremstyle}{0pt}{0pt}{\itshape}{}{\bf}{.}{.5em}{} 
\theoremstyle{mytheoremstyle}
\newtheorem{lemma}[theorem]{Lemma}
\xpatchcmd{\proof}{\topsep6\p@\@plus6\p@\relax}{}{}{}
\newcommand{\tocomplete}[1]{}
\newcommand{\reminder}[1]{}
\newcommand{\eat}[1]{}
\newcommand{\uniondp}{UnionDP}
\newcommand{\fullversion}[1]{}
\newcommand{\new}[1]{{{#1}}}
\newcommand{\revise}[1]{{{#1}}}
\newcommand\blfootnote[1]{%
  \begingroup
  \renewcommand\thefootnote{}\footnote{#1}%
  \addtocounter{footnote}{-1}%
  \endgroup
}
\newcommand{\todo}[1]{{\footnotesize \textcolor{red}{$\ll$\textsf{TODO #1}$\gg$}}}
\newcommand{\remove}[1]{}
\renewcommand\footnotetextcopyrightpermission[1]{} 
\begin{document}
 \fancyhead{}
\title{Efficient Massively Parallel Join Optimization for Large Queries*}

\author{Riccardo Mancini}
\affiliation{%
  \institution{Scuola Superiore Sant'Anna}
  \country{}
}
\email{rickyman7@gmail.com}
\author{Srinivas Karthik}
\affiliation{
\institution{EPFL}\country{}
}
\email{skarthikv@gmail.com}
\author{Bikash Chandra}
\affiliation{
\institution{EPFL}\country{}
}
\email{bikash.chandra@epfl.ch}
\author{Vasilis Mageirakos}
\affiliation{
\institution{University of Patras}\country{}
}
\email{vasilis.mageirakos@gmail.com}
\author{Anastasia Ailamaki}
\affiliation{
\institution{EPFL \& RAW Labs SA}\country{}
}
\email{anastasia.ailamaki@epfl.ch}

\renewcommand{\shortauthors}{}

\begin{abstract}
\blfootnote{* This work was partially funded by the EU H2020 project SmartDataLake (825041)}
Modern data analytical workloads often need to run queries over a large number of tables. An optimal query plan for such queries is crucial for being able to run these queries within acceptable time bounds. However, with queries involving many tables, finding the optimal join order becomes a bottleneck in query optimization. Due to the exponential nature of join order optimization, optimizers resort to heuristic solutions after a threshold number of tables. Our objective is two fold: (a) reduce the optimization time for generating optimal plans;  and (b) improve the quality of the heuristic solution. 

In this paper, we propose a new massively parallel algorithm, {\mpdp}, that can efficiently prune the large search space (via a novel plan enumeration technique) while leveraging the massive parallelism offered by modern hardware (Eg: GPUs).  When evaluated on real-world benchmark queries with PostgreSQL, {\mpdp} is at least an order of magnitude faster compared to state-of-the-art techniques for large analytical queries. As a result,  we are able to increase the heuristic-fall-back limit from 12 relations to 25 relations with same time budget in PostgreSQL.  \new {Also,
in order to handle queries with even larger number of tables, we augment {\mpdp} to a well known heuristic, IDP$_2$ (iterative DP version 2) and a novel heuristic UnionDP}. By systematically exploring a much larger search space, these heuristics provides query plans that are up to 7 times cheaper as compared to the state-of-the-art techniques while being faster to compute.

\end{abstract}

%
\begin{CCSXML}
<ccs2012>
   <concept>
       <concept_id>10002951.10002952.10003197.10010822.10010823</concept_id>
       <concept_desc>Information systems~Structured Query Language</concept_desc>
       <concept_significance>500</concept_significance>
       </concept>
 </ccs2012>
\end{CCSXML}

\keywords{Parallel Query Optimization, GPU, Dynamic Programming}

\settopmatter{printfolios=true} 
\maketitle

\section{Introduction}
\label{sec:intro}

Data analytics in the modern world require processing of queries on large and complex datasets. In several business reporting tools, these analytical queries are automatically generated by the system. Such system generated queries tend to be very long (even up to megabytes in size). A single analytical query in such scenarios may contain up to several hundreds of tables, with even moderately sized queries having nearly 50 relations~\cite{dieu20091, chen2009partial, Neumann2018}.  Modern data analytical systems need to efficiently handle such large queries. The existence of these large query scenarios and inability of existing systems to handle them is described in \cite{dieu20091}.

\begin{figure}[t]
\begin{center}
\begin{boxedminipage}{3.25in}
{\setstretch{0}
\sf select o\_orderdate from lineitem, orders, part, customer \\
where {p\_partkey = l\_partkey} and {o\_orderkey = l\_orderkey} \\
and o\_custkey = c\_custkey
}
\end{boxedminipage}
\end{center}
\caption{Example TPC-H Query}
\label{fig:SPJ}
\end{figure}

In order to process such large analytical queries efficiently, finding optimal or near-optimal  query plans is essential. Finding an optimal join order for such queries with large number of relations is a challenging problem as the search space grows exponentially with the number of relations. 
For instance, {\postgres} takes as much as around 160 secs to find the optimal plan even for  a 21-relation join query\footnote{The optimization time was measured on a server with 2 Xeon CPUs on star join query}, while SparkSQL takes 1000 secs to plan an 18-relation \cite{krishnan2018learning}. Hence, current systems, resort to heuristics beyond a certain threshold number of relations (e.g. 12 relation in {\postgres}). 

Heuristics, however, may miss the optimal plan and, in such cases, the query execution time could be significantly higher than the optimal plan~\cite{Neumann2018}. Even though heuristics can produce sub-optimal plans, they are required to process queries with several 100s of relations. 

Our goal in this paper is to improve the performance of query optimizers for large join queries ($\geq$ 10 rels). Specifically, we aim to:
\begin{itemize}[noitemsep,topsep=0pt,leftmargin=*]
    \item Reduce the  query optimization time of \text{optimal} (or exact) algorithms. As a consequence, for a given  time budget,  increase the  heuristic-fall-back limit in terms of the number of relations.
    \item Improve the quality of heuristic techniques given a time budget. 
\end{itemize}
  
In this work, we focus on Dynamic Programming (DP) based join order optimization, which is typically used in current systems \cite{postGreSql, db2}. Moreover, we consider a solution without cross products similar to the one used in \cite{dpccp}, since it is well known that cross products do not form part of an optimal join order in most cases. 

The efficiency of any such DP  technique can be compared based on two key parameters:

\begin{enumerate}[noitemsep,topsep=0pt,leftmargin=*]
\item \emph{Number of join pairs evaluated:} 
DP algorithms typically follow an enumerate-and-evaluate approach. For the example query in Figure~\ref{fig:SPJ}, during plan exploration, it generates and evaluate if the following {\joinpairs} can form a valid (sub)plan: 1) (part, orders); 2) (part, lineitem); 3) (orders, lineitem). However, only {\joinpairs} (2) and (3)  are \emph{valid} as there is a corresponding join predicate in the query, while (1) is not valid since it has to be executed using a cross join. We provide a more precise description of valid {\joinpairs} in Section~\ref{sec:metric}. The fewer the invalid {\joinpairs} evaluated, the more efficient the algorithm is.

\item\emph{Parallelizability:} Another way to reduce the optimization time is to perform the join order optimization in parallel. \fullversion{For instance, finding the best (sub)plan at any level $i$ in {\dpsize} can be done in parallel}. 
For instance, $(part, lineitem)$ and $(orders, lineitem)$ {\joinpairs} can be evaluated in parallel. Note that \emph{not} all DP algorithms are easily parallelizable due to dependency between {\joinpairs} (detailed in Section~\ref{sec:metric}). The more parallelizable the algorithm is, the better is the performance. 

\end{enumerate}

A comparison of existing join order optimization techniques based on the above two parameters is shown in Figure~\ref{fig:landscape}. The Y-axis shows, for an input query, the  number of {\joinpairs} evaluated by different DP techniques normalized to the total number of valid {\joinpairs} for the query. 
The X-axis shows the parallelizability of the techniques. The evaluation is performed on a 20-relation query from the real world {\musicbrainz} dataset. 

\subsubsection*{\textbf{Optimal Solutions:} } 
The traditional {\dpsize} algorithm explores the search space in increasing sub-relation sizes. While  {\dpsub} enumerates all the powerset of relations in the subset \emph{precedence} order. Both {\dpsize} and {\dpsub} evaluate a lot of  invalid {\joinpairs} (around 500 times the valid {\joinpairs} as captured in the figure), and hence are inefficient. PDP~\cite{HanKLLM08} propose techniques to parallelize {\dpsize} but still evaluates a lot of invalid join pairs. 
Meister et al. in \cite{Meister2020}  leverage the GPU parallelism to further reduce query optimization time. They propose GPU parallel versions of {\dpsize} and {\dpsub} which scales better than the corresponding CPU parallel  ones. \new{Based on the enumeration style, we categorize {\dpsize} and {\dpsub} as \emph{vertex-based enumeration}}.

 \new{In contrast to \emph{vertex-based enumeration}, an \emph{edge-based enumeration}, {\dpccp}~\cite{dpccp}, evaluates only valid {\joinpairs}}. It enumerates the {\joinpairs} based on join graph dependencies which makes it difficult to parallelize.
Han et al.~\cite{HanL09}, parallelizes {\dpccp} but their producer-consumer paradigm for plan enumeration and costing 
limits its parallelizability~\cite{Meister2017}.

In this paper, we discuss a novel parallel join order optimization algorithm, {\mpdp} (Massively Parallel DP), which can be executed over GPUs (running with high degree of parallelism) or CPUs. The algorithm exploits the best of both {\dpsub} and {\dpccp} --  high parallelizability of {\dpsub} and minimum evaluation of {\joinpairs} from {\dpccp}. 
For the 20-rels example, the total {\joinpairs} evaluated by {\mpdp} is only twice that of valid {\joinpairs} for the query.

\subsubsection*{\textbf{Approximate/Heuristic Solutions:} } 
Since join order optimization is NP-Hard in general, for very large join queries, heuristics must be used. 
{\postgres} uses a genetic optimization based algorithm for such queries. When handling 100s of relations, an interesting approach, Iterative Dynamic Programming (IDP) \cite{Kossmann2000}, can be used which iterates over smaller join sizes and then combines them.
Recently, \cite{Neumann2018} proposes an adaptive optimization, {\lindp}, for handling large join queries by linearizing the DP search space.

 \begin{figure}
 	\centering
 	\includegraphics[width=0.43\textwidth,keepaspectratio=true]{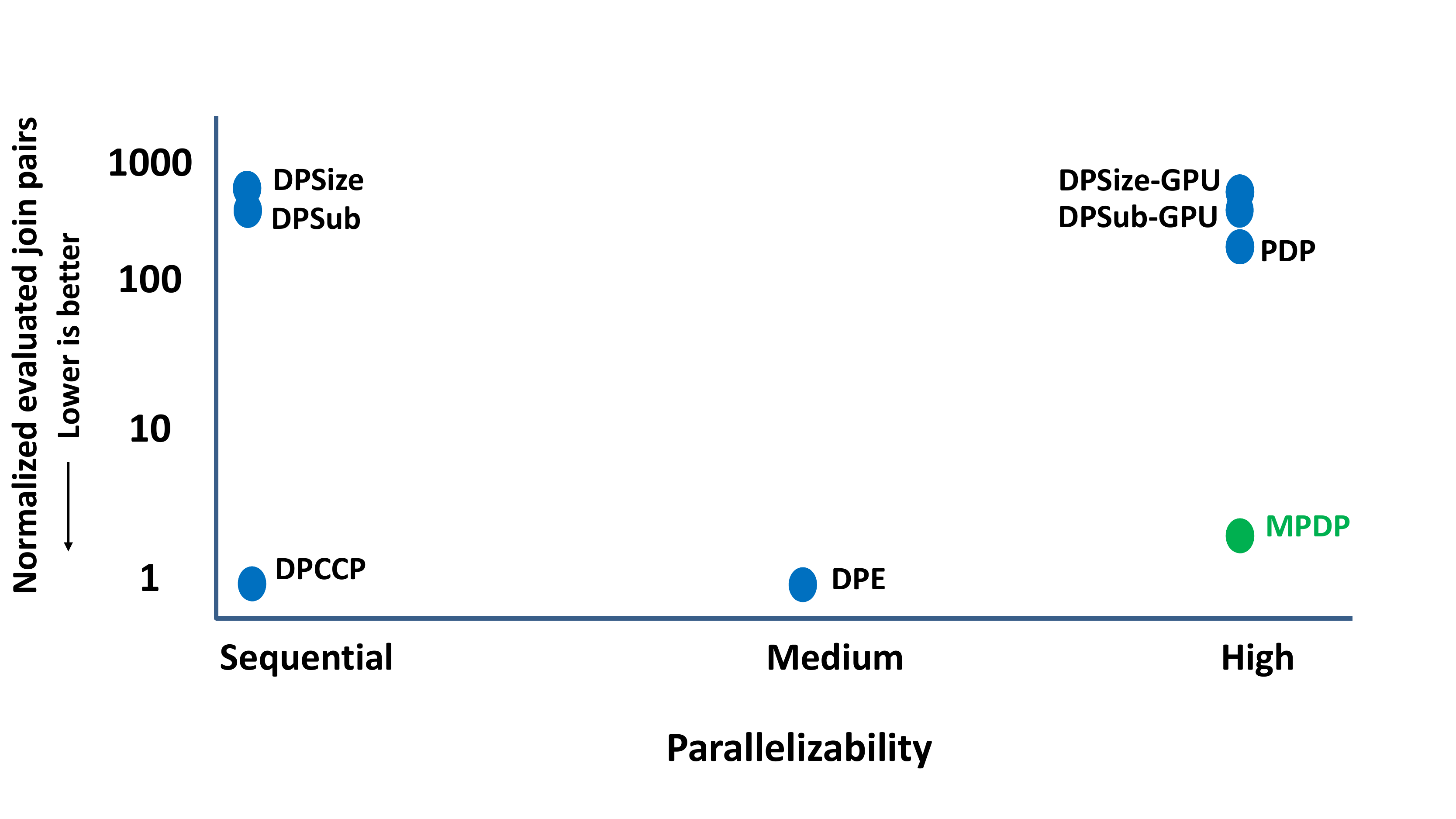}
 	\caption{Comparison of join ordering techniques}
 	\label{fig:landscape}
 \end{figure}
 
  We augment {\mpdp} with an existing heuristic algorithm, IDP. Due to the algorithmic efficiency and high parallelizability nature of {\mpdp}, we are able to systematically explore a much larger space compared to state-of-the-art solutions. 
  \new{We also develop a new heuristic technique, {\uniondp}, that leverages the join graph topology to get higher quality solution. The idea is to carefully partition the graph, use {\mpdp} on each partition, and systematically combine them. }

Our main contributions in this paper are as follows:
\begin{itemize}[noitemsep,topsep=0pt,leftmargin=*]
    \item We design, {\mpdp}, a new join order algorithm that is highly  parallelizable and evaluates only few invalid {\joinpairs}. We theoretically prove that the algorithm produces the optimal join order. Further, in case of commonly occurring tree join graphs, i.e. for star and snowflake join graphs, we prove that we do \emph{not} evaluate any invalid {\joinpairs}. 
    \new{We achieve this by proposing a novel plan enumeration technique that combines the vertex and edge-based enumeration. This hybrid enumeration is performed on carefully chosen subgraphs to make the algorithm massively parallelizable}.

    \item \new{In order to handle queries with even more relations than what is possible with optimal {\mpdp}, we propose two heuristic solutions that are algorithmically efficient and highly parallelizable. We discuss the heuristic solutions in Section~\ref{sec:approx}.} 

    \item We evaluate {\mpdp} (both exact and heuristic) on the open source {\postgres} database engine using queries on real world {\musicbrainz} dataset. To the best of our knowledge, our implementation on {\postgres},  is the \emph{first GPU-accelerated query optimizer} on a widely used  database system.  The implementation details are discussed in Section~\ref{sec:gpu_algo}.
    
    Our experimental results, in Section~\ref{sec:expt} show, for the exact solution, on the MusicBrainz dataset we get speed up of 80X compared to state-of-the-art parallel CPU algorithm ({\dpe}) on a 23 -relation query, and a factor 19X compared to state-of-the-art GPU based DP algorithm ({\dpsub}-GPU) on   a 26-relation query. Also, as a consequence, we can increase heuristic-fall-back limit from  12 to 25 relations in {\postgres} with same time budget.
    Both our heuristics can handle join queries with 1000 relations, and significantly improves over the state-of-the-technique in terms of quality of plans produced. Moreover, it also optimizes queries with 1000 relations under 1 minute. 
    

    \eat{This is in stark contrast to PostgreSQL which takes more than 5 mins (which times out for Z relations). Further we get speed up of X compared to state-of-the-art parallel CPU algorithm, and a factor Y compared to state-of-the-art GPU based DP algorithm. Also, as a consequence, we can increase heuristic threshold of 12-relation in Postgresql to 25-relations with the same optimization time limit (of 1 min).} 
    
\end{itemize}

We discuss relevant background in Section~\ref{sec:bg} and related work in Section~\ref{sec:relwork}. 

\eat{
\subsubsection{Organization:} The remainder of this paper is organized as follows: In Section~\ref{sec:bg}, problem framework and relevant background are enumerated. {\mpdp} for tree join graphs, its generalization and their theoretical analysis are  presented in Section~\ref{sec:gpu_algo}. This is followed by the {\mpdp}'s heuristic solution in Section~\ref{sec:approx}. Subsequently, GPU implementation details and related work are described in Section~\ref{sec:gpu_algo} and Section~\ref{sec:relwork}, respectively. Then, the experimental results are presented in Section~\ref{sec:expt}. Finally, conclusion and future work are enumerated in Section~\ref{sec:conc}. 
}

\section{Problem Framework and Background}
\label{sec:bg}

In this section, we discuss the problem framework and required notations.  Then we describe, {\dpsub} in detail -- the join order algorithm upon which we have built {\mpdp}. 
Finally, we also present key graph theory concepts that our solution uses.

\subsection{Valid {\joinpair} (\ccpPair)}
\label{sec:metric}
For a given query, we can represent the joins of the query as a graph $G(R,E)$, where the vertices $R=\{R_1, \cdots, R_n\}$ denote the set of all relations in the FROM clause of the query, while the edges, $E$, correspond to the inner join predicates in the query.
DP algorithms typically follow an enumerate-and-evaluate approach. \reminder{For  $S_{left},S_{right} \subset R$, it enumerates a  ${\text{\joinpair}}(S_{left},S_{right})$ and evaluates if it can form a valid sub-plan.} 
A {\joinpair}($S_{left},S_{right}$) is said to be valid (or can be joined to create a sub-plan) if all the following conditions hold true: 
\begin{enumerate}[noitemsep,topsep=0pt,leftmargin=*]
    \item Both $S_{left}$ and $S_{right}$ are non-empty subsets of $R$ 
    \item Induced subgraphs\footnote{Given a graph $G=(V,E)$ and a subset $S, \subset V$ of vertices, the induced subgraph of $S$ in $G$ is $G[S]=(S,E')$, where $E'=\{(a,b) | (a,b) \in E \land a \in S \land b \in S\}$ is the set of edges between nodes in $S$.} of both $S_{left}$ and $S_{right}$ in $G$ are connected
    \item $S_{left} \cap  S_{right} = \emptyset $  (disjoint) 
    \item $S_{left}$ is connected to $S_{right}$, i.e. there exists a vertex $v_l \in S_{left}$ and $v_r \in S_{right}$ such that there is an edge $(v_l,v_r) \in E$
\end{enumerate}

Note that any {\joinpair} $(S_{left}, S_{right})$ that satisfies all the above conditions is a Connected-subgraph Complement Pair ({\ccpPair}) as termed in \cite{dpccp}. We use the terms {\ccpPair} and  valid {\joinpair} interchangeably in the paper. 
{\ccpCount} represents the total number of {\ccpPairs} in a query, including the symmetric ones.
This count is dependent on join graph topology, and vastly varies between star, chain, cycle and clique graphs. 
However, for a given query, {\ccpCount} when profiled on any optimal DP  algorithm such as {\dpsize}, {\dpsub} and {\dpccp} will produce the same value.

Let us consider an example join graph with 8 relations shown in Figure~\ref{fig:tree}. Say that a DP algorithm enumerates a ${\text{\joinpair}}(S_{left}, S_{right})$ where $S_{left} = \{1,2,4\}$ and $S_{right} = \{6,7,8\}$. Since, there is no edge between these two sets in the join graph, it is \emph{not} a {\ccpPair}. While $S_{left} = \{1,2,4\}$ and $S_{right} = \{5,6\}$ is a {\ccpPair}, that can form a sub-plan with  \{$S_{left} \cup S_{right}$\} relations.

\begin{figure}[t]
\centering
\includegraphics[width=0.4\textwidth,keepaspectratio=true]{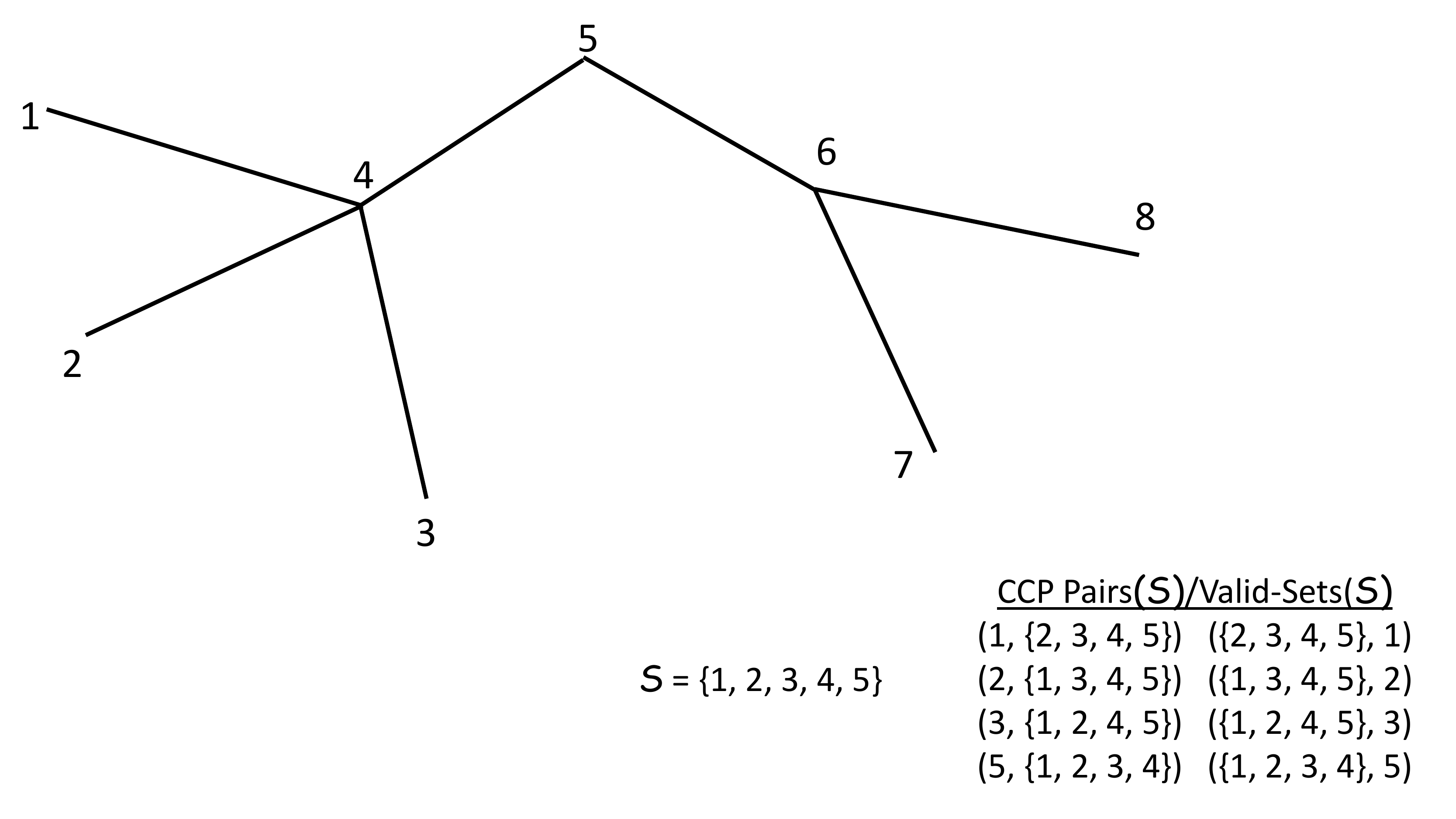}
\caption{Enumeration of {\ccpPairs} with Tree Join Graph}
\label{fig:tree}
\end{figure}

\subsubsection{Dependencies among {\joinpairs}}
\label{subsec:dependency}
Since we would like to develop a highly parallel algorithm, we should also keep into consideration the dependencies among {\joinpairs} that are enumerated. 
We say that a \joinpair($S_{left},S_{right}$) depends on \joinpair($S'_{left},S'_{right}$) if either $S_{left}$ or $S_{right}$ are the result of the join between $S'_{left}$ and $S'_{right}$.
In order to evaluate N {\joinpairs} in parallel, they must have no dependency among them. 
For instance, if the resulting joined-relation after joining a {\joinpair} has the same size $s$, then they will have no dependencies. Hence, {\dpsize} and the modified {\dpsub} in Algorithm~\ref{algo:generic_dpsub} can be  parallelized as discussed in~\cite{Meister2020,HanKLLM08}. 
It is also possible to define a dependency class based on the size of $S_{left}$~\cite{HanL09}.

\new{In this paper, we do not consider cross products which is a well accepted thumb rule in query optimization \cite{shanbhag2014optimizing,sqlserver_cross, dbjournal}. Furthermore, \cite{dbjournal, sqlserver_cross} suggest  to avoid cartesian joins  unless one has a reason such as joining small relations, or in case of division operator. This is primarily because cross products dramatically increase the search space but rarely produce better quality plans.
}

\subsubsection{Objective}
The objective is to develop an algorithm that, for any input query q, is able to find the optimal join order for $q$ without cross-products. This is achieved by the following sub-goals:  
(1) minimize the evaluation of {\joinpairs} that are  \emph{not}  {\ccpPairs}; 
(2) minimize dependency between {\joinpairs} while enumeration.

The first sub-goal is required  to minimize evaluating unnecessary {\joinpairs}, while the second is required for the algorithm to be highly parallelizable (i.e., scale to the massive parallelization offered by  GPUs).
Existing algorithms either fail to achieve the first or second objectives. 


\subsection{Generic DPSub Algorithm}
\begin{algorithm}
\setstretch{0.1}
	\caption{: Generic {\dpsub}}
	\label{algo:generic_dpsub}	
	\begin{algorithmic}[1]
		\REQUIRE $QI$: Query Information
		\ENSURE Best Plan
		\FORALL{$R_i \in QI.baseRelations$} 	
		   \STATE BestPlan($\{R_i\}$) = $R_i$ \label{alg1-line:initialization}
		\ENDFOR
		\FOR{$i:=2$ to $QI.querySize$} \label{alg1-line:Size}
		    \STATE $S_i = \{S\; | \; S \subseteq R \text{ and }|S| = i \text{ and S is connected}\}$ \label{alg1-line:Si}
            \FORALL {$\setS \in S_i$}
		    \label{alg1-line:SetS}
		    	\STATE //the following is done in parallel
		    \FORALL{$S_{left} \subseteq \setS$} \label{alg1-line:SubSetS}
		        \STATE {\innerCounter} ++ \label{alg1-line:innerCounter}
		        \STATE $S_{right} = \setS \setminus  S_{left}$ 
		        \STATE /*Begin CCP  Block */
		        \STATE \textbf{if} {$S_{right} == \emptyset$ or $S_{left} == \emptyset$}  \label{alg1-line:start-block}
		         \textbf{continue} 
		        \STATE \textbf{if} {not $S_{left}$ is connected } \label{alg1-line:connectedl} \textbf{continue}
		        \STATE \textbf{if} {not $S_{right}$ is connected } \label{alg1-line:connectedr} \textbf{continue}
		        \STATE \textbf{if} {not $S_{right} \cap S_{left} = \emptyset$}
		        \textbf{continue} \label{alg1-line:disjoint}
		         \STATE \textbf{if} {not $S_{right}$ is connected to $S_{left}$ }
		        \textbf{continue} \label{alg1-line:end-block}
		        \STATE /* End CCP Block */
		        \STATE {\ccpCount} $++$
		        \STATE CurrPlan = CreatePlan($S_{left},S_{right}$)
		        \IF {CurrPlan < $BestPlan(S)$} \label{alg1-line:bestplan1}
		        \STATE $BestPlan(\setS)$ = CurrPlan \label{alg1-line:bestplan2} 
		        \ENDIF
		    \ENDFOR
		    \ENDFOR 
		\ENDFOR
		\RETURN BestPlan(QI.baseRelations) //best plan for the query
	\end{algorithmic}
\end{algorithm}

We now present the generic {\dpsub} algorithm. 
The pseudo-code of {\dpsub} is shown in  Algorithm~\ref{algo:generic_dpsub}. For the sake of consistency, the presented pseudo-code is similar to the one used in~\cite{dpccp}.  

The algorithm iterates over all possible subset sizes $i$, and, for each size, it evaluates all non-empty connected subsets of
relations ${R_{1},..., R_{n}}$ (where $n$ is the number of relations in the query) of size $i$, constructing the best possible plan for each of them. 
The final plan is chosen at the root of the dynamic programming lattice.  Since the algorithm enumerates using subsets of relations, it is called Dynamic Programming Subset, or in short {\dpsub}. 

The algorithm starts with initializing $BestPlan(R_i)$ with its corresponding single relation $R_i$ (Line~\ref{alg1-line:initialization}). Here, $BestPlan(S)$ contains the best plan for any subset $S \subseteq R$ at any point of the DP algorithm. Then, the outermost nested  {\forloop} (Line~\ref{alg1-line:Size})
 collects all connected subsets, $S_i \subset R$, of  size $i$ in each iteration $i$ (Line~\ref{alg1-line:Si}). 
Further, in the middle nested {\forloop} (Line~\ref{alg1-line:SetS}), the goal  is to evaluate set
$\setS$ and get the best plan for it by the end of its iteration (Line~\ref{alg1-line:bestplan1} - Line~\ref{alg1-line:bestplan2}).  

Finally, in the innermost nested {\forloop}, all possible {\joinpairs}  are evaluated to see if it is  a {\ccpPair} based on the four conditions discussed in Section~\ref{sec:metric}. 
\eat{Specifically, 
for each  {\joinpair}$(S_{left}, S_{right})$  all the following {\ccpPair} conditions need to be satisfied: }
\begin{enumerate}
    \item Both $S_{left}$ and $S_{right}$ are non-empty subset of $\setS$ (Line~\ref{alg1-line:start-block}) 
    \item Both $S_{left}$ and $S_{right}$ should  be connected\footnote{For the sake of brevity, with ``subset $S$ is connected'', we mean that its induced subgraph $G[S]$ in the query graph $G$ is connected.} (Line~\ref{alg1-line:connectedl} - Line~\ref{alg1-line:connectedr}) 
    \item $S_{left}$ and $S_{right}$ should be disjoint (Line~\ref{alg1-line:disjoint})
    \item An edge should exist between $S_{left}$ and $S_{right}$ in its join graph (Line~\ref{alg1-line:end-block})
\end{enumerate}
All the above conditions are part of what we call as the \emph{Connected-subgraph Complement Pair} (CCP) block (Line~\ref{alg1-line:start-block} - Line~\ref{alg1-line:end-block}). 

If a {\joinpair}$(S_{left}, S_{right})$ happens to be a {\ccpPair}, then a plan, \textit{currPlan}, is created using the set $\{S_{left} \cup S_{right}\}$. If \textit{currPlan} is better than the current best plan for $\setS$, it is updated accordingly. \new{Based on the enumeration style, we refer {\dpsub} to as vertex-based enumeration. }

\begin{figure}[t]
\centering
\includegraphics[width=0.37\textwidth, height = 3.5cm]{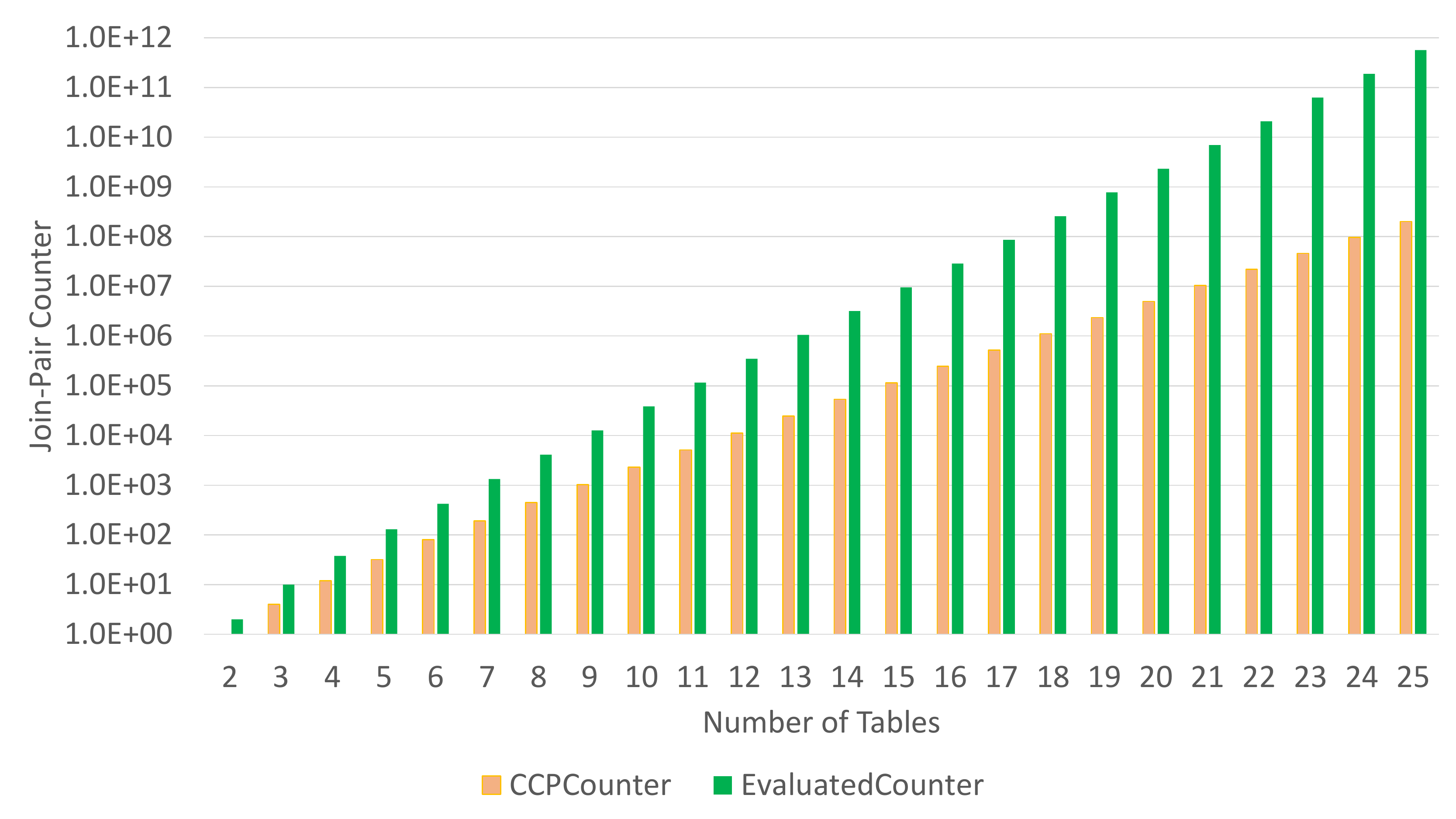}
\caption[{\genCount} vs {\ccpCount}]{{\innerCounter} and {\ccpCount} values for star join queries using {\dpsub}. {\innerCounter} can be around 2800 times larger than {\ccpCount}.}
\label{fig:gen-ccp-count}
\end{figure}

\new {
\subsubsection{Implementation Details: }
\label{sec:dpsub_impl}
All  sets  and  adjacency  lists  are  implemented  as  bitmap  sets. $S_i$ in  line~\ref{alg1-line:Si}  is  enumerated  using  the combinatorial system presented in \cite{dpccp}. While $S_{left}$ is obtained by enumerating from 1 to $2^{|S_i|} $, upon expanding the result of $S_i$ bits using parallel bit deposit (PDEP). Finally, checking the  connectivity of any set $S$ in the CCP block is done by using a \emph{grow} function from a random vertex in $S$ and checking if all vertices in $S$ are reachable  (\emph{grow} function is explained in more detail in Section~\ref{sec:grow}).}

\subsubsection{Parallelization:} 
 {\dpsub} is amenable for parallelization since sets $S_i$ are enumerated in increasing size as shown in Algorithm~\ref{algo:generic_dpsub}. Specifically, since enumeration of any ${\setS}$ of size $i$ is independent of each other, each iteration of the loop can be executed in parallel. Thus, the middle nested {\forloop} (Line~\ref{alg1-line:SetS}) can be parallelized. Further, even the computations in the innermost nested {\forloop}  (Line~\ref{alg1-line:SubSetS}) iterations are independent (and thus parallelizable), excluding the $BestPlan(S)$ update, which can be deferred to a later pruning step. 

\subsection{Shortcomings of {\dpsub}}
The main problem with {\dpsub} is that it evaluates the {\joinpairs} corresponding to the powerset of Set $\setS$ (Line~\ref{alg1-line:SubSetS} of Algorithm~\ref{algo:generic_dpsub}), and a small fraction of it ends up being {\ccpPair}s. In the algorithm, the total number of {\joinpairs} evaluated and the number of {\ccpPair}s are captured by {\innerCounter} and {\ccpCount}, respectively. 
In order to find the relative gap between the two counters,  we run {\dpsub} for star join graph queries, with varying number of relations. 
The results of this evaluation, as shown in Figure~\ref{fig:gen-ccp-count}, suggests that the gap between {\innerCounter} and {\ccpCount} increases with larger queries. Further, {\innerCounter} is around 2805 times larger (relatively) compared to {\ccpCount} at 25 relations.

Thus, although, {\dpsub} can be computed in a massively parallelizable manner, it evaluates  a lot of {\joinpairs} that are \emph{not} {\ccpPairs}. This is a motivation for us to design a parallel algorithm which minimizes the gap between {\innerCounter} and {\ccpCount}. 

\subsection{Relevant Graph Theoretic Terminologies}
\label{sec:graph_theory}
We now briefly discuss key graph theoretic terminologies that we use in our work. We use the graph shown in Figure~\ref{fig:ex-join-graph} as an example. 

\begin{enumerate}[noitemsep,topsep=0pt,leftmargin=*]
\item\emph{Cut Vertex}: 
 A cut vertex in an undirected graph is a vertex whose removal (and corresponding removal of all the edges incident on that vertex) increases the number of connected components in the graph. 
 For the example join graph in Figure~\ref{fig:ex-join-graph}, $\{4,5,9\}$ are cut vertices.

\item\emph{Nonseparable graph:} A graph $G$ is said to be separable if it is either disconnected or can be disconnected by removing one vertex. A graph that is not separable is said to be nonseparable.
 
\item\emph{Biconnected Component or block:} A biconnected component (or block) of a given graph is a maximal nonseparable subgraph. Note that a block contains some cut vertices of the graph, but does not have any cut vertex in the block itself. In our example, $\{1,2,3,4\}; \{4,5\}; \{5,9\}; \{6,7,8,9\}$ are blocks.

\item\emph{Block-Cut Tree:}
From a given graph G, we can build a bipartite tree, called block-cut tree, as follows. (1) Its vertices are the blocks and the cut vertices of G. (2) There exist an edge between a block and a cut vertex if that cut vertex is included in the block.
In our example, the block-cut tree would be a chain:
$\{1,2,3,4\} - 4 - \{4,5\} - 5 - \{5,9\} - 9 - \{6,7,8,9\}$.
\end{enumerate}

\section{{\mpdp}: A New Massively Parallel Optimal Algorithm}
\label{sec:algo}
In this section, we discuss our new Massively Parallel Dynamic Programming algorithm, {\mpdp}. For ease of presentation, we first discuss the simpler case when the join graph is a \emph{Tree}, and then generalize it to arbitrary join graphs. Commonly occurring star and snowflake join graphs belong to tree scenario \cite{dieu20091}. 

\fullversion{\subsection{{\dpsize} vs {\dpsub}} 
The most efficient CPU algorithm is {\dpccp}, which explores all and only the valid {\ccpPairs}. However, it does not lend itself to an easy and efficient parallelization. 
{\dpe}~\cite{HanL09} tries to overcome this problem using a producer-consumer paradigm: CCP pairs are enumerated in a producer thread and stored in a dependency-aware buffer, while consumers extract CCP pairs from the buffer and compute their join cost. 
However, Meister et al.~\cite{Meister2017} have found that this approach is effective only if the cost function is complex, due to the overheads introduced by managing the dependency buffer.

Another possibility would be {\dpsize}, which can be effectively parallelized on multiple CPU cores, as shown by the PDP algorithm~\cite{HanKLLM08}.
However, it enumerates orders of magnitude more {\joinpairs} than {\ccpPairs}, and this difference increases exponentially with the number of tables~\cite{dpccp}.
Overcoming this challenge in {\dpsize} is not trivial, in fact, being topology-agnostic, improvements based on the graph topology are not possible. 
}

\subsection{Tree Join Graphs}

\subsubsection{\textbf{Algorithm Description}}

The pseudo-code of {\mpdp} for the tree scenario is shown in Algorithm~\ref{algo:mpdp_trees}. To distinguish {\mpdp} from the general case, we refer to the algorithm as {\mpdptree}.  Note that the pseudo-code only contains the main {\forloop} corresponding to evaluating  set $\setS$. 
We have omitted the rest of the code since it is same as that used in {\dpsub} (Algorithm~\ref{algo:generic_dpsub}). Further, we have highlighted in \textcolor{red}{red} the difference in code between the two algorithms.  

The main idea of the algorithm is the following: Since the join graph is a tree, then the subgraph induced by $\setS$ is also a tree.  Then, the number of {\ccpPairs} of $S$ is exactly $i-1$, which corresponds to the {\joinpairs} formed by removing each edge in the tree induced by $\setS$ (Line~\ref{alg2-line:valid-join-pair}). In Figure~\ref{fig:tree}, for the tree graph, we also enumerate the {\joinpairs} created by removing edges for $\setS=\{1,2,3,4,5\}$. Given this insight, we only iterate over all {\ccpPairs} (Line~\ref{alg2-line:SubSetS}), create a plan for it and update the $BestPlan(\setS)$ accordingly. Thus, the algorithm do not incur any CCP conditions checking overheads. Resulting in {\innerCounter} being equal to {\ccpCount}.

Note that the idea of edge based {\joinpairs} join enumeration is very similar to the one used in {\dpccp}. However, the key difference is that {\dpccp} performs it at whole graph level, while we do it for subsets ${\setS}$ of size $i$. By doing this, apart from efficient enumeration, we maintain high \emph{parallelizability} of {\dpsub}, as both middle and innermost {\forloop} are parallelizable between their iterations. 

\begin{algorithm}
\setstretch{0.1}
	\caption{ {\mpdptree}}
	\label{algo:mpdp_trees}	
	\begin{algorithmic}[1]
		\FOR{$i:=2$ to $QI.querySize$} \label{alg2-line:Size}
		    \STATE $S_i = \{S\; | \; S \subseteq R \text{ and }|S| = i \text{ and S is connected} \}$ \label{alg2-line:Si}
		    \FORALL {$\setS \in S_i$} \label{alg2-line:SetS}
		    \STATE \textcolor{red}{\text{Valid-Join-Pairs($\setS$)} = Create {\joinpairs} by removing each edge in subgraph induced by  $\setS$} \label{alg2-line:valid-join-pair}
		    
		    \FORALL{\textcolor{red}{$(S_{left},S_{right}) \in \text{Valid-Join-Pairs}(\setS)$}} \label{alg2-line:SubSetS}
		        \STATE {\innerCounter} ++ \label{alg2-line:innerCounter}
		        \STATE {\ccpCount} $++$
		        \STATE CurrPlan = CreatePlan($S_{left},S_{right}$)
		        \IF {CurrPlan  < $BestPlan(\setS)$} \label{alg2-line:bestplan1}
		        \STATE $BestPlan(\setS)$ = CurrPlan \label{alg2-line:bestplan2} 
		        \ENDIF
		    \ENDFOR
		    \ENDFOR 
		\ENDFOR
	\end{algorithmic}
\end{algorithm}

{
\subsubsection{\textbf{Proof of Correctness}}
In order to show that our proposed algorithm is correct, we need to prove the following:
\begin{enumerate}
    \item Only the {\ccpPairs} corresponding to connected set $\setS$ are enumerated, i.e. any \joinpair $(S_{left}, S_{right}) \in Valid-Join-Pairs(S)$ is a {\ccpPair}
 (Lemma~\ref{lemma:tree:only}). 
    \item All the {\ccpPairs} corresponding to connected set $\setS$ are enumerated (Lemma~\ref{lemma:tree:all}).
    \fullversion{\item Every {\ccpPair} corresponding to connected set $\setS$ is enumerated only once (Lemma~\ref{lemma:tree:once}).}
\end{enumerate}{}

\begin{lemma}\label{lemma:tree:only}
Any \joinpair $(S_{left}, S_{right}) \in \text{Valid-Join-Pairs}(\setS)$ is a {\ccpPair} (Line~\ref{alg2-line:valid-join-pair}).
\end{lemma}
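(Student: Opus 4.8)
The plan is to verify the four {\ccpPair} conditions from Section~\ref{sec:metric} one by one for an arbitrary $(S_{left},S_{right}) \in \text{Valid-Join-Pairs}(\setS)$. By construction, such a pair arises by picking an edge $e=(u,v)$ of the subgraph $G[\setS]$ induced by $\setS$, deleting $e$, and letting $S_{left}$ and $S_{right}$ be the vertex sets of the two resulting pieces. The key structural fact I would invoke first is that since $G$ is a tree and $\setS$ is connected, $G[\setS]$ is a subtree of $G$, in particular a tree on $i=|\setS|$ vertices with exactly $i-1$ edges.

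First I would establish conditions (1) and (3) together: removing a single edge from a tree $G[\setS]$ disconnects it into exactly two connected components, whose vertex sets partition $\setS$. Hence $S_{left}$ and $S_{right}$ are disjoint, their union is $\setS$, and each is nonempty (each component contains at least one endpoint of $e$). Condition (2) is then immediate: each of the two pieces obtained by deleting an edge from a tree is itself a tree, hence connected; I should note that the induced subgraph $G[S_{left}]$ equals this component (no extra edges can appear, since any edge of $G[\setS]$ between two vertices of $S_{left}$ would already be present in the component, as $G[\setS]$ is a tree and the component is a maximal connected subgraph after deleting $e$), and similarly for $S_{right}$. Finally, condition (4): the deleted edge $e=(u,v)$ has one endpoint $u \in S_{left}$ and the other $v \in S_{right}$, and $e \in E$, so $S_{left}$ is connected to $S_{right}$.

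The only mild subtlety — and the step I would be most careful about — is the claim that $G[S_{left}]$ is exactly the component of $G[\setS] - e$ containing $u$ (and not something larger, which could a priori fail condition (2) or (3) if $G$ had cycles). This is precisely where treeness is used: in a tree there is a unique path between any two vertices, so deleting $e$ leaves no alternative route between $u$'s side and $v$'s side, and no edge of $G[\setS]$ other than $e$ crosses the partition. I would spell this out in one or two sentences and then conclude that all four conditions hold, so $(S_{left},S_{right})$ is a {\ccpPair}.
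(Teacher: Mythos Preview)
Your proposal is correct and follows essentially the same approach as the paper: verify the four {\ccpPair} conditions directly, using that $G[\setS]$ is a tree so that removing one edge yields exactly two nonempty connected pieces partitioning $\setS$, with the removed edge witnessing condition~(4). The paper's proof is terser (it argues connectedness of $S_{left},S_{right}$ by contradiction---if either were disconnected then $\setS$ would be---rather than your direct ``each piece is itself a tree'' observation) and is less explicit about the induced-subgraph subtlety you flag, but the underlying argument is the same.
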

\begin{proof}
Trivially, both  $S_{left}, S_{right} \neq \emptyset$. Similarly, $S_{left} \cup S_{right} = \setS$ and $S_{left} \cap S_{right} = \emptyset$ also hold true since the {\joinpair} is created by removing a single edge in the tree. 
Then, further both $S_{left}, S_{right}$ are connected, if not, then $\setS$ would also be disconnected which leads to the contradiction that $\setS$ is connected. 
\fullversion{The final condition that there is an edge between $S_{left}$ and $S_{right}$ is also satisfied, which is exactly the  edge that was removed to form the {\joinpair}.}
\end{proof}

\begin{lemma}\label{lemma:tree:all}
{\dpsub} and {\mpdptree} evaluate same set of    {\ccpPairs}.
\end{lemma}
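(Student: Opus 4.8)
The plan is to fix a connected subset $\setS \subseteq R$ with $|\setS| = i$ and show that {\mpdptree} and {\dpsub} evaluate exactly the same {\ccpPairs} that partition $\setS$; since both algorithms iterate over the identical collection of connected sets $S_i$ (Line~\ref{alg2-line:Si} and Line~\ref{alg1-line:Si}), summing this agreement over all such $\setS$ and all sizes $i$ yields the lemma. For {\dpsub} the per-$\setS$ claim is just the observation that its CCP block (Lines~\ref{alg1-line:start-block}--\ref{alg1-line:end-block}) accepts a {\joinpair} of $\setS$ precisely when the four conditions of Section~\ref{sec:metric} hold, i.e.\ precisely when it is a {\ccpPair}. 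So it suffices to prove that the {\joinpairs} {\mpdptree} produces for $\setS$ in Line~\ref{alg2-line:valid-join-pair} --- one per edge of the induced subgraph $G[\setS]$ --- are in bijection with the {\ccpPairs} of $\setS$, via the map $e \mapsto (S_{left}(e), S_{right}(e))$ sending an edge to the two connected components of $G[\setS]$ left after deleting $e$.

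First I would record the structural fact underpinning everything: because $\setS$ is connected and $G$ is a tree, $G[\setS]$ is connected and acyclic, hence a tree on $i$ vertices with exactly $i-1$ edges, and deleting any one of its edges leaves exactly two connected components. This makes the map above well defined, and by Lemma~\ref{lemma:tree:only} each pair it outputs is a {\ccpPair}; so the pairs {\mpdptree} evaluates for $\setS$ form a subset of the {\ccpPairs} of $\setS$, all of which {\dpsub} evaluates. It then remains to show the map is onto --- which gives the reverse inclusion --- and, for the cleanest bijective statement, injective.

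For surjectivity, let $(A,B)$ be any {\ccpPair} with $A \cup B = \setS$. Condition (4) of Section~\ref{sec:metric} supplies an edge $e = (v_l, v_r) \in E$ with $v_l \in A$ and $v_r \in B$, and since $A, B \subseteq \setS$ this edge lies in $G[\setS]$. Now $G[A]$ is connected (condition (2)), contains $v_l$, and does not use $e$ (whose endpoints straddle the partition), so $G[A]$ is a connected subgraph of $G[\setS] \setminus \{e\}$ and hence $A$ is contained in the component of $v_l$; symmetrically $B$ lies in the component of $v_r$. Because $A$ and $B$ are disjoint (condition (3)) with union $\setS$, which is also the disjoint union of the two components, these containments are equalities, so deleting $e$ recovers exactly $(A,B)$. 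For injectivity one uses treeness again: if two distinct edges of $G[\setS]$ induced the same partition, both would cross the cut $(A,B)$, and closing them up through a path inside $A$ and a path inside $B$ would create a cycle in the tree $G[\setS]$ --- a contradiction. Hence the map is a bijection, {\mpdptree} and {\dpsub} see the same {\ccpPairs} of $\setS$, and the lemma follows.

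The step I expect to need the most care --- and the only genuine use of treeness beyond the edge count --- is making the correspondence tight: surjectivity hinges on condition (4) producing the exact edge whose deletion splits $\setS$ correctly, which relies on the connectedness of $G[A]$ and $G[B]$ to trap $A$ and $B$ inside single components, while injectivity relies on the fact that a cut of a tree into two connected parts is crossed by a unique edge. A small bookkeeping point to state explicitly is the handling of orientations: {\dpsub}'s inner loop ranges over all $S_{left} \subseteq \setS$ and so visits both orderings of each partition, so I would read Line~\ref{alg2-line:valid-join-pair} of {\mpdptree} as emitting both orderings of each edge partition (equivalently, costing both), making ``same set of {\ccpPairs}'' hold verbatim at the level of ordered pairs as well.
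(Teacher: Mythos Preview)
Your proposal is correct and follows essentially the same approach as the paper: the paper's proof simply observes that any {\ccpPair} $(S_{left},S_{right})$ evaluated by {\dpsub} is recovered in {\mpdptree} by removing the edge that, by definition of {\ccpPair}, connects the two parts. Your argument is the same idea but considerably more fleshed out --- you explicitly justify why deleting that edge yields exactly $(A,B)$ via the containment-in-components step, and you additionally supply the injectivity and ordered-pair bookkeeping that the paper either omits or defers to a separate lemma.
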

\begin{proof}
Consider a {\ccpPair} $(S_{left}, S_{right})$ evaluated by {\dpsub}. 
Then, exactly the same {\joinpair} would also be enumerated by removing the edge that exist, by definition of {\ccpPair}, between the two sets. 
\end{proof}

\fullversion{
\begin{lemma}\label{lemma:tree:once}
All the {\ccpPairs} corresponding to connected set $\setS$ are enumerated only once.
\end{lemma}
\begin{proof}
\new{ The proof follows easily by contradiction, and thus skip it in the interest of space.} 
 this by contradiction: Say that a {\joinpair} $(S_{left},S_{right})$ is enumerated twice. Then there exist two edges whose individual removal  would result in {\ccpPair} $(S_{left},S_{right})$. By definition of {\ccpPair}, this results in $\{S_{left} \cup S_{right}\}$ having a cycle. Hence a contradiction. 
\end{proof}}

From the above lemmas, the following theorem can be inferred:
\begin{theorem}
{\mpdptree} finds the optimal join order while evaluating only {\ccpPairs} (meeting the {\ccpCount} lower bound)
\end{theorem}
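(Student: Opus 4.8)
The plan is to assemble the theorem from the three lemmas that precede it, since the theorem makes two claims — optimality of the join order, and tightness of the {\ccpCount} bound — and each follows almost directly from what has already been established about {\mpdptree}.

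First I would handle the claim that {\mpdptree} evaluates \emph{only} {\ccpPairs} and that it meets the {\ccpCount} lower bound. Lemma~\ref{lemma:tree:only} already shows that every {\joinpair} enumerated in Line~\ref{alg2-line:SubSetS} is a {\ccpPair}, so {\innerCounter} never counts anything that is not a {\ccpPair}; combined with Lemma~\ref{lemma:tree:all}, which shows {\mpdptree} enumerates exactly the same set of {\ccpPairs} as {\dpsub} (and hence the same set any correct DP algorithm must consider), we get that {\innerCounter} equals {\ccpCount}. Since {\ccpCount} is, by definition in Section~\ref{sec:metric}, the intrinsic number of valid {\joinpairs} of the query and is a lower bound on the work of any optimal DP algorithm, this is exactly the ``meeting the {\ccpCount} lower bound'' statement. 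If one wants the stronger ``each {\ccpPair} evaluated once'', that is Lemma~\ref{lemma:tree:once} (the no-cycle argument), which upgrades the set-equality to a multiset-equality.

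Next I would argue optimality of the resulting join order. The key observation is that {\mpdptree} differs from {\dpsub} (Algorithm~\ref{algo:generic_dpsub}) only in \emph{which} {\joinpairs} reach the plan-costing and $BestPlan$-update steps: the outer loop over sizes $i$, the loop over connected subsets $\setS \in S_i$, the call to \texttt{CreatePlan}, and the $BestPlan(\setS)$ comparison and update are all identical. Since {\dpsub} is a known optimal DP algorithm, its correctness rests on the invariant that when subset $\setS$ of size $i$ is processed, $BestPlan(\setS)$ ends up holding an optimal cross-product-free plan for $\setS$, obtained by taking the best over all ways of splitting $\setS$ into a {\ccpPair} $(S_{left}, S_{right})$ and joining the already-optimal subplans $BestPlan(S_{left})$ and $BestPlan(S_{right})$. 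By Lemma~\ref{lemma:tree:all} the set of splits {\mpdptree} enumerates for $\setS$ is precisely the set of {\ccpPairs} of $\setS$ — the same set {\dpsub} effectively considers after its CCP block filters — so {\mpdptree} computes the same minimum and maintains the same invariant. Processing sizes in increasing order guarantees that $BestPlan(S_{left})$ and $BestPlan(S_{right})$ are already finalized when $\setS$ is processed (both proper subsets of $\setS$, hence strictly smaller), so the recursion is well-founded. Applying the invariant at the root $\setS = QI.baseRelations$ gives the optimal join order for the query.

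The main obstacle, such as it is, is not any single hard step but making precise the reduction ``{\mpdptree} = {\dpsub} restricted to its CCP-valid iterations.'' Concretely I need to confirm two things: that for tree-induced subgraphs the edge-removal enumeration in Line~\ref{alg2-line:valid-join-pair} produces a set of {\joinpairs} in bijection with the {\ccpPairs} of $\setS$ that {\dpsub}'s inner loop would retain (the forward direction is Lemma~\ref{lemma:tree:only}, the reverse is Lemma~\ref{lemma:tree:all}, and injectivity is Lemma~\ref{lemma:tree:once}); and that nothing else in the control flow or the $BestPlan$ semantics changed, so the DP invariant transfers verbatim. Once that reduction is spelled out, both halves of the theorem are immediate corollaries of the three lemmas plus the known correctness of {\dpsub}, and I would not expect to need any new combinatorial argument.
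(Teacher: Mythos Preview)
Your proposal is correct and follows essentially the same approach as the paper, which simply states that the theorem ``can be inferred'' from Lemmas~\ref{lemma:tree:only} and~\ref{lemma:tree:all} (and, in the full version, Lemma~\ref{lemma:tree:once}). You spell out more carefully the reduction to {\dpsub}'s known optimality and the $BestPlan$ invariant, which the paper leaves implicit, but the logical structure is the same.
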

}
\subsection{Generalization}
\label{sec:mpdp_gen}
\new{
After having seen {\mpdptree} for tree join graphs, generalizing to join graph with cycles would pose the  following challenges:
\begin{enumerate}[noitemsep,topsep=0pt,leftmargin=*]
    \item \emph{Edge-based enumeration: } With cyclic graphs, removing edges as in tree scenario may not form a join-pair. For instance in Figure~\ref{fig:ex-join-graph}, removing edge (1,4) would not form a {\joinpair}. 
    \item \emph{Vertex-based enumeration: } Boils down to the conventional {\dpsub} which falls prey to highly  inefficient enumeration. 
\end{enumerate}
\emph{Our contribution is a novel enumeration technique which is a hybrid of
vertex and edge-based enumeration that results in: (1) efficient enumeration (i.e. close to minimum  {\joinpair} evaluation); (ii) highly parallelizable.}  This is achieved by identifying blocks (or biconnected components) in the graph. Then, we perform: a) edge-based enumerated along the cut edges between blocks; b) vertex-based enumeration within the blocks. Further, a vertex-based enumeration within a block happens by creating {\joinpairs} within each block. 
Then, using the edge-based enumeration along cut-edges, we create a {\joinpair} for the set $\setS$ using the block {\joinpair} as the seed nodes. 
We show the correctness of the algorithm by mapping the {\joinpair} at the block-level to the {\joinpair} at the set ${\setS}$ level. Since the expensive vertex-based enumeration is just limited to blocks, the number of join-pair evaluation reduces from $2^{|\setS|}$ to $\mathcal{O}(\text{no. of blocks} * 2^{\text{ max. block size}})$. For our cyclic graph example, it reduces from 512 to just 32.}

\fullversion{
Before presenting the algorithm, let us look into its two key submodule, which is the  \emph{grow} and \emph{connected} functions.
\begin{algorithm}
	\caption{: Grow function (source, restriction)}
	\label{algo:grow}	
	\renewcommand{\algorithmicrequire}{\textbf{Inputs:}}
	\renewcommand{\algorithmicensure}{\textbf{Output:}}	
	\begin{algorithmic}[1]
		\REQUIRE $source$: set of source nodes to find the connected subset\\
		         $restriction$: subset of query within which growth is limited  \\
		\ENSURE a subset of nodes from \emph{restriction} that are connected to at least one node in source nodes 
		\STATE V $\leftarrow$ $\emptyset$
		\STATE N $\leftarrow$ $Source$
		
		\WHILE{$N \neq \emptyset$}
		    \STATE $x$ $\leftarrow$ $first(N)$
		    \STATE $V$ $\leftarrow$ $V \bigcup \{x\}$
		    \STATE \label{lab:union-restriction} $N$ $\leftarrow$ $(N \bigcup (Neighbours(x) \bigcap Restriction)) \setminus V$
		\ENDWHILE
		
		\RETURN $V$
	\end{algorithmic}
\end{algorithm}
}

\fullversion{
\begin{algorithm}
	\caption{: Connected function}
	\label{algo:connected}	
	\renewcommand{\algorithmicrequire}{\textbf{Inputs:}}
	\renewcommand{\algorithmicensure}{\textbf{Output:}}	
	\begin{algorithmic}[1]
		\REQUIRE $S$: subset to check for connectivity\\
		         $edges$: adjacency list for query graph
		\ENSURE $true$ if $S$ is connected in the graph, $false$ otherwise
		\IF{$S = 0$}
		    \RETURN $false$
		\ELSIF{$grow(min(S), S, edges) = S$}
		    \RETURN $true$
		\ELSE
		    \RETURN $false$
		\ENDIF
	\end{algorithmic}
\end{algorithm}
} 

\subsubsection{\textbf{Grow Function}}
\label{sec:grow}
\new{The \emph{grow} function takes as input a set of \emph{source} nodes and \emph{restricted} nodes (superset of \emph{source} nodes), and output all the nodes in the \emph{restricted} set that are reachable from source nodes.} This is achieved by iteratively adding all the restricted nodes such that they are connected to at least one node in the \emph{source} set, and \emph{growing} the source set by adding to it. For example in Figure~\ref{fig:ex-join-graph}, if \emph{source} nodes are  \{1,2,3\} and restricted nodes are \{1,2,3,4,5,9\}, then grow function returns \{1,2,3,4,5,9\}. 

\fullversion{
On the other hand, Algorithm~\ref{algo:connected} captures the pseudocode for the \emph{connected} function. For efficiency we implemented it in the following way: It checks if an given set of relations, $T$ is connected or not. This is achieved by calling the \emph{grow} function with a random vertex $v\in T$ as the source node and $T$ as the restricted nodes.  If the output of \emph{grow} functions happens to be $T$ itself, then $T$ is connected. If not, $T$ is disconnected. 

\begin{lemma}
Grow returns all the nodes only in the restricted set. 
\end{lemma}
\begin{proof}
In line~\ref{lab:union-restriction} of Algorithm~\ref{algo:grow}, we add only the set of with the intersection of restriction set.  
\end{proof}

\begin{lemma}
Connected function always returns the correct value. 
\end{lemma}
\begin{proof}
Let $W = grow(random(T),T)$. Let us consider two cases when the connection function returns true and false, respectively. In the true case, we can use the explicit path used by the \emph{grow} function to show that the set is connected. In the other case when the function returns false, say by contradiction that set is still connected. Since $W \neq T$, and from the previous lemma  $W \subseteq T$, there exist a node $v \in T \setminus W$.  Further, since $T$ is connected, then there is a path from $v$ to a vertex $t \in grow(random(T),T)$. Moreover, there is an edge $(v_1, v_2)$ such that $v_1 \in W$, and $v_2 \in T - W$. This is a contradiction since $v_2 \in T$ (in the neighborhood) would be added to $W$ with the \emph{grow} function. 
\end{proof}
} 

\begin{figure}
\centering
\includegraphics[width=0.43\textwidth]{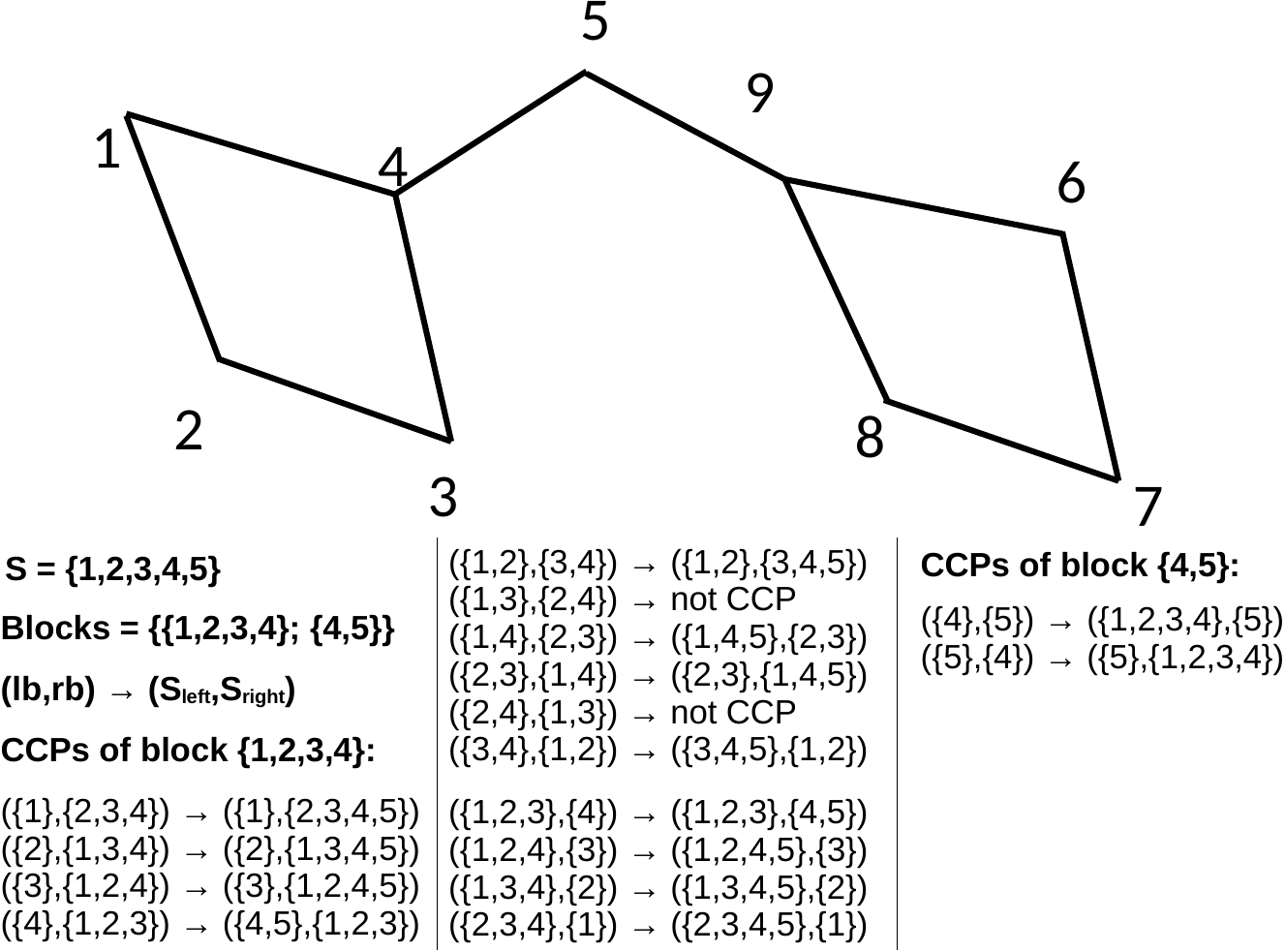}
\caption[Example Join Graph with 9 Relations]{Example Join Graph with 9 Relations}
\label{fig:ex-join-graph}
\end{figure}

\subsubsection{\textbf{Algorithm Description}}
\label{sec:alg_des_gen}
\begin{algorithm}
\setstretch{0}
	\caption{: {\mpdp} generalization (with cycles)}
	\label{algo:mpdp_generalization}	
	\renewcommand{\algorithmicrequire}{\textbf{Inputs:}}
	\renewcommand{\algorithmicensure}{\textbf{Output:}}	
	\begin{algorithmic}[1]
	\FOR{$i:=2$ to $QI.querySize$} \label{alg3-line:Size}
		\STATE $S_i = \{S\; | \; S \subseteq R \text{ and }|S| = i \text{ and S is connected}\}$ \label{alg3-line:Si}

		\FORALL {$\setS \in S_i$} \label{alg3-line:SetS}
		\STATE $BLOCKS$ $\leftarrow$ $\text{Find-Blocks}(\setS, QI)$ \label{alg3-line:findblocks}
		\FORALL{$block$ $\in$ $BLOCKS$} \label{alg3-line:iterate_blocks}
    		\FORALL{$lb$ $\subset$ $block$, $lb \neq \emptyset$} \label{alg3-line:iterate-block}
    		    \STATE {\innerCounter}++
    		    \STATE $rb$ $\leftarrow$ $block \setminus lb$ \label{alg3-line:rb}
    		    
    		    \STATE /*Begin CCP  Block */
		        \STATE \textbf{if} {$rb == \emptyset$ or $lb == \emptyset$} \label{alg3-line:start-block}
		         \textbf{continue} 
		        \STATE \textbf{if} {not $lb$ is connected } \label{alg3-line:connectedl} \textbf{continue}
		        \STATE \textbf{if} {not $rb$ is connected } \label{alg3-line:connectedr} \textbf{continue}
		        \STATE \textbf{if} {not $rb \cap lb == \emptyset$}
		        \textbf{continue} \label{alg3-line:disjoint}
		         \STATE \textbf{if} {not $rb$ is connected to $lb$ }
		        \textbf{continue} \label{alg3-line:end-block}
		        \STATE /* End CCP Block */
    		    \STATE {\ccpCount} $++$
    		    
    		    \STATE $\setS_{left}$ $\leftarrow$ $grow(lb, \setS \setminus rb)$ \label{alg3-line:grow_left}
    		
    		\STATE \new{$\setS_{right} $ $\leftarrow$  $\setS \setminus \setS_{left}$}
    		\label{alg3-line:grow_right}

		        \STATE CurrPlan = CreatePlan($S_{left},S_{right}$)
		        \IF {CurrPlan < $BestPlan(\setS)$} \label{alg3-line:bestplan1}
		        \STATE $BestPlan(\setS)$ = CurrPlan \label{alg3-line:bestplan2} 
    		    \ENDIF
    		\ENDFOR
		\ENDFOR
		\ENDFOR
		\ENDFOR
	\end{algorithmic}
\end{algorithm}

We now present the generalized {\mpdp} algorithm, the pseudocode of which is presented in Algorithm~\ref{algo:mpdp_generalization}. 
For the outermost nested {\forloop} (Line~\ref{alg3-line:SetS}), the key difference from {\dpsub} is that, instead of iterating over all subsets of ${\setS}$,  we iterate over all subsets of each block in ${\setS}$. This block-level enumeration results in significantly lower {\joinpair} evaluation (cf. Section~\ref{sec:mpdp:analysis}). 
\fullversion{In particular, it is equivalent to {\ccpCount} if all blocks of $\setS$ are fully connected (cf. Lemma~\ref{lemma:generic:opt}), e.g. when the subgraph induced by $\setS$ is a tree or a clique. It is implemented in the GPU  using a warp-level parallelism~\cite{bfc-bicc}. 

}

\new{We first identify all the blocks\footnote{\new{In the scenario where we do not find blocks, it boils down to the case of pure vertex-based enumeration, i.e. {\dpsub}.}} in $\setS$ using Find-Blocks function (Line~\ref{alg3-line:findblocks}). The Find-Blocks function can be implemented using the DFS-based Hopcroft and Tarjan algorithm~\cite{hopcroft-bicc} -- a parallel version of it also exist~\cite{bfs-bicc}. \footnote{For intuition, the cyclic graph can be represent as a Block-cut tree, i.e. a tree of blocks connected by cut-edges. Note that creation of block-cut tree is not necessary to find the blocks.}}

Next, for each block (Line~\ref{alg3-line:iterate_blocks}), we iterate over all subsets, $lb$ of the block, compute its complement within the block, $rb$, and check that they form a {\ccpPair} for the block (Lines~\ref{alg3-line:start-block}~-~\ref{alg3-line:end-block}).
Next the key step is to create a {\ccpPair} wrt $\setS$ using the {\ccpPair} $(lb, rb)$. 
This is achieved by using the \emph{grow} function  on 
computing $\setS_{left}$, the set of reachable nodes within the restriction set, $\setS-rb$, from the source nodes in $lb$ (Line~\ref{alg3-line:grow_left}). 
Likewise, ${\setS_{right}}$, reachable nodes with the restriction set, $\setS-lb$, that can be visited starting from the source nodes in $rb$ (Line~\ref{alg3-line:grow_right}). Finally, for each {\ccpPair} $(\setS_{left},\setS_{right})$, a plan is created and $BestPlan(\setS)$ is updated accordingly.  

\emph{Parallelizability:} By processing $\setS$ over blocks,  parallelizability of the algorithm is not impacted compared to {\dpsub}. This is because, all the three nested {\forloop}s (Line~\ref{alg3-line:SetS}, Line~\ref{alg3-line:iterate_blocks}, Line~\ref{alg3-line:iterate-block}), including newly added innermost nested {\forloop} which can be run in parallel.  

\subsubsection{\textbf{Proof of Correctness}}
The proof structure is along the lines of Tree scenario. We show the following with respect to set $\setS$: 1) All the {\ccpPairs}  are enumerated  (Lemma~\ref{lemma:generic:all})
; 2) All pairs $(S_{left},S_{right})$ are {\ccpPairs}  (Lemma~\ref{lemma:generic:only}); \fullversion{3) Every {\ccpPair}  is enumerated only once (Lemma~\ref{lemma:generic:once})}.

\begin{lemma} \label{lemma:generic:all}
{\dpsub} and {\mpdp} enumerates the same set of {\ccpPair}s.
\end{lemma}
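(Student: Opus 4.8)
\textbf{Proof proposal for Lemma~\ref{lemma:generic:all}.}

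The plan is to show a bijection (or at least a two-way containment) between the {\ccpPairs} enumerated by {\mpdp} at Line~\ref{alg3-line:grow_right} of Algorithm~\ref{algo:mpdp_generalization} and the {\ccpPairs} of the connected set $\setS$ that {\dpsub} would enumerate. Since it has already been argued (and is restated as Lemma~\ref{lemma:generic:only}) that every pair $(\setS_{left},\setS_{right})$ produced by {\mpdp} is a genuine {\ccpPair} of $\setS$, the substantive direction here is the reverse inclusion: every {\ccpPair} of $\setS$ arises from some block {\ccpPair} $(lb,rb)$ via the \emph{grow} construction. So the proof will be: fix an arbitrary {\ccpPair} $(\setS_{left},\setS_{right})$ of the induced subgraph $G[\setS]$, and exhibit a block of $\setS$ and a split of that block which the algorithm's innermost loop examines and which maps back to $(\setS_{left},\setS_{right})$.

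First I would set up the structural facts about the decomposition of $G[\setS]$ into blocks. The key observation is that a {\ccpPair} $(\setS_{left},\setS_{right})$ corresponds to a cut of the connected graph $G[\setS]$ into two connected pieces such that the only cross edges are the cut edges; by a standard property of the block-cut tree, such a cut must ``act'' entirely inside a single block --- that is, there is exactly one block $B$ of $G[\setS]$ that contains vertices of both $\setS_{left}$ and $\setS_{right}$, and every other block lies wholly in one side. (A cut that separated two blocks would have to pass through a cut vertex, but a cut vertex belongs to both sides only if it is duplicated, which is not allowed for the disjoint partition $\setS_{left}\sqcup\setS_{right}=\setS$; so the split is realized within one block, while the cut vertices route the rest of the tree to one side or the other.) Set $lb = B \cap \setS_{left}$ and $rb = B \cap \setS_{right}$. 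I would then verify that $(lb,rb)$ passes the CCP block for $B$ (Lines~\ref{alg3-line:start-block}--\ref{alg3-line:end-block}): both are nonempty because $B$ meets both sides; $lb \cap rb = \emptyset$ and $lb \cup rb = B$ by construction; $lb$ and $rb$ are connected because $B$ is biconnected and the cut of $G[\setS]$ restricted to $B$ is exactly the cut of $B$ into the two block-induced pieces (any block other than $B$ that they ``borrow'' vertices through is entirely on one side, so connectivity inside $B$ is not affected); and there is a cross edge $lb$--$rb$ since $B$ is connected and $B$ is not contained in one side. Hence this $(lb,rb)$ is one of the pairs the loop at Line~\ref{alg3-line:iterate-block} reaches.

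Next I would show that $grow(lb, \setS \setminus rb)$ returns exactly $\setS_{left}$, so that Line~\ref{alg3-line:grow_left}--\ref{alg3-line:grow_right} reconstruct the pair $(\setS_{left},\setS_{right})$. One inclusion is immediate: $\setS_{left}$ is connected, contains $lb$, and is contained in $\setS \setminus rb$, so \emph{grow} starting from $lb$ within that restriction reaches all of $\setS_{left}$. For the other inclusion, suppose \emph{grow} reached some vertex $v \in rb$; wait --- $v$ cannot be in $rb$ since $rb$ is excluded from the restriction set, so any reached vertex outside $\setS_{left}$ must lie in $\setS_{right}\setminus rb$. But then there is a path from $lb \subseteq \setS_{left}$ to $v \in \setS_{right}$ avoiding $rb$; trace this path to its first edge crossing from $\setS_{left}$ to $\setS_{right}$ --- that crossing edge is a cross edge of the {\ccpPair} $(\setS_{left},\setS_{right})$, hence (by the structural fact above) it is an edge inside block $B$, so its $\setS_{right}$ endpoint lies in $B \cap \setS_{right} = rb$, contradicting that the path avoids $rb$. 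Therefore \emph{grow} returns precisely $\setS_{left}$, and $\setS_{right} = \setS \setminus \setS_{left}$ follows, completing the reverse inclusion. Combined with Lemma~\ref{lemma:generic:only}, the two algorithms enumerate the same set of {\ccpPairs}.

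The main obstacle I anticipate is making the ``every {\ccpPair} cut is localized to one block'' claim fully rigorous: it rests on the interaction between the block-cut tree of $G[\setS]$ and the disjointness constraint $\setS_{left}\cap\setS_{right}=\emptyset$ (in particular, handling cut vertices of $G[\setS]$ cleanly --- each cut vertex falls on exactly one side, which is what forces all of the ``action'' into a single block). The argument that \emph{grow} does not overshoot also hinges on this same localization fact, so it is worth isolating it as a small sublemma. I do not expect the parallelizability remarks or the CCP-block bookkeeping to pose any difficulty; those are routine once the block-localization of cuts is in hand.
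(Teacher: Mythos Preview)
Your approach is essentially identical to the paper's: both localize all cross-edges of the {\ccpPair} to a single block $B$ (equivalently, $B$ is the unique block meeting both sides), set $(lb,rb)=(B\cap\setS_{left},\,B\cap\setS_{right})$, verify this is a block {\ccpPair}, and then show \emph{grow} reconstructs the original pair. The one place where the paper is crisper is the connectivity of $lb$ and $rb$: rather than invoking biconnectivity of $B$ directly (which does not by itself imply that an arbitrary subset is connected), the paper argues by contradiction that a disconnection of $lb$ would yield a path in $\setS_{left}$ outside $B$ together with a path inside $B$ between the same two vertices, forming a cycle spanning multiple blocks and violating maximality --- exactly the ``localization'' sublemma you flagged as the main obstacle.
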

\begin{proof}
Let's take any {\ccpPair}$(\setS_{left},\setS_{right})$ enumerated by {\dpsub}. Since $\setS_{left}$ is connected to $\setS_{right}$, there exists at least one edge connecting a node in $\setS_{left}$ and a node in $\setS_{right}$. We want to prove that these edges are all inside the same block. If there is only one edge, it is obvious, since one edge can belong to only one block.

By contradiction, let's assume there are 2 edges, $(n_l,n_r), (n'_l,n'_r) \mid n_l,n'_l \in \setS_{left} \land n_r,n'_r \in \setS_{right}$, which are contained in different blocks. 
Since $\setS_{left}$ and $\setS_{right}$ are connected, this would imply the existence of a cyclic path passing through these two blocks. This contradicts the maximality property of the block.

Since these edges are all inside the same block, then we can identify $lb = \setS_{left} \cap block$ and $rb = \setS_{right} \cap block$. We want to prove that $(lb,rb)$ is a {\ccpPair} for the $block$. 
In fact, by construction: $lb,rb \neq \emptyset \land lb,rb \in \setS$; $lb \cap rb = \emptyset$; $lb$ connected to $rb$.
We only need to prove that $lb$ and $rb$ induce connected subgraphs.
Let's consider $lb$. By contradiction, let's assume that it is not connected and let's take $n_l,n'_l \in lb$, belonging to different connected components of the subgraph induced by $lb$. Since $\setS_{left}$ is connected, then there is a path outside the block that joins $n_l$ to $n'_l$. Since $n_l,n'_l \in block$, there exists a path within the block connecting these two nodes. This implies the existence of a cyclic path spanning multiple blocks, which contradicts the property of maximality of the block.
The same is true for $rb$.

Since $(lb,rb)$ is a {\ccpPair} for the $block$, it will be enumerated by {\mpdp}, because it exhaustively enumerates all {\joinpairs} in the block. Finally, we need to show that $\setS_{left} = grow(lb,block \setminus rb)$ and that $\setS_{right} = grow(rb,block \setminus lb)$.

Considering $lb$, since the edges connecting $\setS_{left}$ to $\setS_{right}$ are only inside the block containing $lb$, and since \emph{grow} is restricted to $\setS \setminus rb$, $S_{left}$ cannot contain any node in $S_{right}$. Furthermore, since $S_{left}$ is connected, \emph{grow} will visit all nodes in $S_{left}$. Therefore, since there are no nodes in $\setS$ not in $\setS_{left}$ and $\setS_{right}$, $\setS_{left} = grow(lb,block \setminus rb)$. 
Likewise, the same can be demonstrated in the same way also for $rb$. Therefore, {\ccpPair}$(\setS_{left},\setS_{right})$ will also be enumerated by {\mpdp}.
\end{proof}

\begin{lemma} \label{lemma:generic:only}
Any \joinpair$(S_{left}, S_{right})$ constructed from the {\ccpPair} $(lb,rb)$ for the block is a {\ccpPair} for $\setS$.
\end{lemma}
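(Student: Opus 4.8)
The plan is to verify the four {\ccpPair} conditions for the constructed pair $(\setS_{left},\setS_{right})$ where $\setS_{left} = grow(lb, \setS \setminus rb)$ and $\setS_{right} = \setS \setminus \setS_{left}$, given that $(lb,rb)$ is a {\ccpPair} for a single $block$ of $\setS$. The key structural fact I would lean on is the same one used in Lemma~\ref{lemma:generic:all}: any edge connecting a node in one ``side'' of a {\ccpPair} to the other side must lie inside a single block, because two such edges in different blocks would create a cycle spanning multiple blocks, contradicting block maximality.

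First I would establish \emph{non-emptiness}: $\setS_{left} \supseteq lb \neq \emptyset$ since $grow$ always returns at least its source nodes; and $\setS_{right} \neq \emptyset$ because $rb \subseteq \setS \setminus \setS_{left}$ (one must check $grow(lb, \setS \setminus rb)$ cannot reach any node of $rb$, since $rb$ is excluded from the restriction set and all edges from $lb$ out of the block-interior that could bypass this exclusion would have to re-enter through another block, again forcing a cycle). \emph{Disjointness} $\setS_{left} \cap \setS_{right} = \emptyset$ and $\setS_{left} \cup \setS_{right} = \setS$ are immediate from the definition $\setS_{right} = \setS \setminus \setS_{left}$. For \emph{connectivity of $\setS_{left}$}: by definition $grow$ returns exactly the set of nodes reachable from $lb$ within the restriction, and since $lb$ is connected (it is part of a {\ccpPair} for the block) and every node added is attached to an already-reached node, the induced subgraph on $\setS_{left}$ is connected. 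For \emph{connectivity of $\setS_{right}$}: here I would argue that $\setS$ is connected, $\setS_{left}$ is a ``$grow$-closed'' set, so every node of $\setS_{right}$ lies on a path back into $\setS_{left}$ that last exits $\setS_{right}$ at some node adjacent to $\setS_{left}$; using connectivity of $rb$ and the block structure (a path in $\setS$ between two nodes of $\setS_{right}$ that detours through $\setS_{left}$ would, combined with the block-interior structure, yield a multi-block cycle) one shows $\setS_{right}$ is connected. Finally, \emph{$\setS_{left}$ is connected to $\setS_{right}$}: the edge witnessing $lb$ connected to $rb$ inside the block has one endpoint in $lb \subseteq \setS_{left}$ and the other in $rb \subseteq \setS_{right}$, giving the required crossing edge.

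The main obstacle I expect is the connectivity of $\setS_{right}$ and the companion claim that $grow$ from $lb$ does not ``leak'' into $rb$ or beyond the intended side — i.e., rigorously pinning down that all crossing edges between the two final sides are confined to the originating block. This is exactly the cycle-versus-block-maximality argument, and it needs to be stated carefully: one picks a hypothetical path that would cause a leak or a disconnection, observes it must re-enter a block it has already left, and derives a cycle through two distinct blocks, contradicting the definition of a biconnected component. Once that lemma-internal claim is isolated and proved, the remaining three conditions are routine, so I would front-load the block-maximality argument and then dispatch the rest quickly — in fact by symmetry the argument for $\setS_{right}$ via $grow(rb, block \setminus lb)$ mirrors the one for $\setS_{left}$, as already noted in the proof of Lemma~\ref{lemma:generic:all}.
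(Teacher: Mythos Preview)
Your proposal is correct and uses the same key structural fact as the paper --- that a path outside the block joining $lb$ to $rb$ would create a cycle spanning multiple blocks, contradicting block maximality --- and then dispatches the four {\ccpPair} conditions in turn.

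The one noteworthy difference is where the hard step lands. The paper's proof implicitly treats \emph{both} $\setS_{left}$ and $\setS_{right}$ as outputs of \emph{grow} (as stated in the prose description of the algorithm), so connectivity of each side is immediate and the block-maximality argument is spent on proving \emph{disjointness} $\setS_{left}\cap\setS_{right}=\emptyset$. You instead follow the pseudocode line $\setS_{right}=\setS\setminus\setS_{left}$, which makes disjointness trivial by construction but pushes the work into showing $\setS_{right}$ is connected (and that $rb\subseteq\setS_{right}$). These are dual placements of the same contradiction, and you explicitly note the symmetry with $grow(rb,\setS\setminus lb)$ at the end, so the two arguments are really the same proof read from opposite ends. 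Your version has the minor advantage of matching the actual algorithm code; the paper's version keeps the connectivity arguments uniform at the cost of a slight mismatch with Line~\ref{alg3-line:grow_right}.
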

\begin{proof}
Trivially, both  $S_{left}, S_{right} \neq \emptyset$. 
Further, $S_{left} \cap S_{right} = \emptyset$ also hold true. 
By contradiction, let's assume that $S_{left} \cap S_{right} \neq \emptyset$. Therefore, there exists a node $n \mid n \in S_{left} \cap S_{right} \land n \not\in block$ that is reachable from a node $n_l$ in $lb$ and a node $n_r$ in $rb$, creating a cycle. This implies that there exists a path outside the block joining $lb$ and $rb$, which contradicts the maximal property of the block. 

In addition, both $S_{left}, S_{right}$ induce connected subgraphs, as they are the result of the \emph{grow} function on connected subsets. Finally, $\setS_{left}$ is also connected to $\setS_{right}$, since $lb$, subset of $\setS_{left}$, is connected to $rb$, subset of $\setS_{right}$.
\end{proof}

From the above lemmas, the following theorem can be inferred:
\begin{theorem}
{\mpdp} finds the optimal join order
\end{theorem}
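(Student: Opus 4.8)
The plan is to reduce the optimality of {\mpdp} to the correctness of {\dpsub}, which is already established as a standard exact DP algorithm. The key observation is that {\dpsub} is known to find the optimal join order because it enumerates, for every connected subset $\setS \subseteq R$, \emph{all} {\ccpPairs} $(S_{left},S_{right})$ with $S_{left}\cup S_{right}=\setS$, costs each candidate plan, and keeps the best; by the principle of optimality for join ordering (the cost of an optimal plan for $\setS$ decomposes into optimal plans for some valid bipartition), this suffices. So the entire burden is to show that {\mpdp} performs \emph{exactly the same set} of $BestPlan(\setS)$ updates as {\dpsub} — no fewer (so nothing optimal is missed) and, for correctness of the cost comparisons, only genuine {\ccpPairs} (so no spurious cross-product plan sneaks in and wins).

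First I would invoke Lemma~\ref{lemma:generic:all}, which states that {\dpsub} and {\mpdp} enumerate the same set of {\ccpPairs} for each connected $\setS$: this guarantees that every bipartition considered by {\dpsub} — in particular the one realizing the optimal plan for $\setS$ — is also considered by {\mpdp}, so the inductive hypothesis ``$BestPlan(\setS')$ is optimal for every connected $\setS'$ with $|\setS'|<i$'' propagates to size $i$. Second, I would invoke Lemma~\ref{lemma:generic:only}, which guarantees that every pair $(\setS_{left},\setS_{right})$ that {\mpdp} actually hands to \texttt{CreatePlan} is a genuine {\ccpPair} of $\setS$; this ensures {\mpdp} never costs (and never selects) a plan involving a cross product, so its $BestPlan(\setS)$ is never corrupted by an invalid candidate. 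Combining these two with the fact that both algorithms share the identical outer loop structure (iterate $i$ from $2$ to $n$, iterate over connected $S_i$, update $BestPlan$ by the same comparison on line~\ref{alg3-line:bestplan1}), a straightforward induction on $i=|\setS|$ shows $BestPlan(\setS)$ equals the optimal cross-product-free plan for $\setS$ at the end of iteration $i$; taking $\setS = R$ yields the claim.

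The structure of the induction: the base case $i=1$ is the initialization $BestPlan(\{R_i\})=R_i$ inherited from {\dpsub}. For the inductive step, fix a connected $\setS$ with $|\setS|=i$ and let $P^\star$ be an optimal cross-product-free plan for $\setS$; its topmost join splits $\setS$ into connected, edge-connected, disjoint halves $(\setS^\star_{left},\setS^\star_{right})$ — a {\ccpPair} — with each side itself optimally planned, hence by the inductive hypothesis $BestPlan(\setS^\star_{left})$ and $BestPlan(\setS^\star_{right})$ already hold those optimal subplans. By Lemma~\ref{lemma:generic:all}, {\mpdp} enumerates this very {\ccpPair}, so it evaluates \texttt{CreatePlan}$(\setS^\star_{left},\setS^\star_{right})$ with cost $= \mathrm{cost}(P^\star)$ and the comparison on line~\ref{alg3-line:bestplan1} ensures $BestPlan(\setS)$ is at most this cost; by Lemma~\ref{lemma:generic:only} every other candidate it considers is a legal plan, so $BestPlan(\setS)$ is never below $\mathrm{cost}(P^\star)$ either. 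Hence $BestPlan(\setS)=\mathrm{cost}(P^\star)$.

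The main obstacle — though it is largely already discharged by the two preceding lemmas — is the ``grow'' step on lines~\ref{alg3-line:grow_left}--\ref{alg3-line:grow_right}: one must be confident that lifting a block-level {\ccpPair} $(lb,rb)$ back up to a set-level pair $(\setS_{left},\setS_{right})$ via $\setS_{left}=grow(lb,\setS\setminus rb)$, $\setS_{right}=\setS\setminus\setS_{left}$ neither loses nor duplicates any {\ccpPair} of $\setS$, and that the induced subgraphs of $\setS_{left},\setS_{right}$ remain connected — precisely the content of Lemmas~\ref{lemma:generic:all} and~\ref{lemma:generic:only}, whose proofs lean on the maximality of biconnected components to rule out cycles spanning multiple blocks. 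Given those, the theorem is essentially a bookkeeping argument: {\mpdp} and {\dpsub} induce the same sequence of $BestPlan$ updates, {\dpsub} is a correct exact DP, therefore {\mpdp} returns the optimal join order.
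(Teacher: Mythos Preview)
Your proposal is correct and takes essentially the same approach as the paper: both derive the theorem from Lemma~\ref{lemma:generic:all} and Lemma~\ref{lemma:generic:only}. The paper's own proof is in fact just the single sentence ``From the above lemmas, the following theorem can be inferred,'' so your explicit induction on $|\setS|$ and the reduction to {\dpsub}'s correctness merely spell out what the paper leaves implicit.
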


\subsubsection{\textbf{Analysis}} \label{sec:mpdp:analysis}
We now analyze the number of {\joinpairs} evaluated by {\mpdp} in comparison to {\dpsub}. 
\fullversion{
\begin{enumerate}
    \item {\mpdp} is more efficient than {\dpsub} (lemma~\ref{lemma:generic:complexity});
    \item {\mpdp} is optimal on graphs whose blocks are all fully connected (lemma~\ref{lemma:generic:opt}), e.g. in trees\footnote{Note that in a tree, all vertices are cut-vertices and all edges $(a,b)$ form a block $\{a,b\}$, which is obviously fully-connected.} and cliques\footnote{Also note that in a clique, there is only one block which is fully-connected.}.
\end{enumerate}
}

\begin{lemma} \label{lemma:generic:complexity}
For a given set $\setS$, the number of subsets evaluated by {\mpdp} is lower than {\dpsub}.
\end{lemma}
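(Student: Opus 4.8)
The plan is to compare, for a fixed connected set $\setS$ of size $i$, the number of iterations of the innermost loop in {\dpsub} (which is exactly $2^{|\setS|} = 2^i$, one per subset $S_{left} \subseteq \setS$) with the number of iterations of the innermost loop in {\mpdp} (which is $\sum_{block \in BLOCKS} (2^{|block|} - 1)$, one per nonempty $lb \subsetneq block$, summed over the blocks of the subgraph induced by $\setS$). So the lemma reduces to the purely combinatorial inequality
\[
\sum_{block \in BLOCKS(\setS)} \bigl(2^{|block|}-1\bigr) \;\le\; 2^{|\setS|},
\]
and I would actually aim to show it is strictly lower whenever $\setS$ has more than one block (and equal, essentially, when there is a single block equal to $\setS$, modulo the $-1$ terms which only help).

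First I would set up notation: let the blocks of $G[\setS]$ be $B_1,\dots,B_k$ with sizes $b_1,\dots,b_k$. The key structural fact I would invoke is that the blocks of a connected graph overlap only at cut vertices, and the block–cut tree is a tree; in particular $\sum_{j} b_j \le |\setS| + (k-1)$, since the block–cut tree on $k$ blocks has $k-1$ cut-vertex–to–block edges and each cut vertex is shared, so the total over-count of vertices is at most $k-1$. Equivalently, $\sum_j (b_j - 1) \le |\setS| - 1$.

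Next I would prove the inequality by an elementary argument on this constraint. Using $2^{b_j} - 1 \le 2^{b_j - 1} \cdot 2 - 1 < 2^{b_j}$ is too weak; instead I would argue that for nonnegative integers $a_1,\dots,a_k$ with $a_j = b_j - 1 \ge 1$ and $\sum_j a_j \le m := |\setS|-1$, we have $\sum_j (2^{a_j+1}-1) = \sum_j 2^{a_j+1} - k \le 2\sum_j 2^{a_j} - k$. Since $2^{x}$ is superadditive on positive integers ($2^{x}+2^{y} \le 2^{x+y}$ for $x,y\ge 1$), repeated application gives $\sum_j 2^{a_j} \le 2^{\sum_j a_j} \le 2^{m}$, whence $\sum_j (2^{b_j}-1) \le 2^{m+1} - k = 2^{|\setS|} - k \le 2^{|\setS|}$, with the bound strict as soon as $k \ge 1$ and even more so when $k\ge 2$ (the superadditivity step is then strict). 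I would also note the degenerate case the footnote in the paper flags: if no blocks are found, {\mpdp} coincides with {\dpsub} and the inequality holds with equality — so the statement should be read as ``lower than or equal to, and strictly lower whenever $\setS$ induces a subgraph with at least one cut vertex.''

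The main obstacle is being careful about what exactly {\dpsub} and {\mpdp} count as an "evaluated subset" so that the comparison is apples-to-apples: {\dpsub}'s innermost loop runs over all $2^i$ subsets of $\setS$ (including the empty set and $\setS$ itself, which are immediately discarded), while {\mpdp}'s runs over all nonempty proper-or-not subsets $lb$ of each block. I would pin this down by reading off the loop bounds from Algorithm~\ref{algo:generic_dpsub} (line~\ref{alg1-line:SubSetS}) and Algorithm~\ref{algo:mpdp_generalization} (line~\ref{alg3-line:iterate-block}) and phrasing the lemma in terms of the {\innerCounter} increments, which are incremented exactly once per such iteration in both algorithms. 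Once the counts are fixed as $2^{|\setS|}$ versus $\sum_j (2^{b_j}-1)$, the rest is the superadditivity computation above and the block–cut tree vertex-count bound, both of which are routine.
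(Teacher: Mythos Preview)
Your proposal is correct and follows essentially the same approach as the paper: both arguments rest on the block--cut tree size identity $\sum_j (b_j - 1) = |\setS| - 1$ (which you state as an inequality, but it is in fact an equality, as the paper asserts) together with the superadditivity of $2^x$ on positive integers to bound $\sum_j 2^{b_j}$ by $2^{|\setS|}$. Your write-up is more explicit about the superadditivity step and the loop-count bookkeeping than the paper's rather terse proof, but the underlying argument is the same.
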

\begin{proof}
Let's start by observing that all subsets in a block are enumerated, which are $2^b$, where $b$ is the size of the block. 
This implies that, for the given set $\setS$, the total number of evaluated subsets is $\sum_{blocks} 2^{b-1}$.
Furthermore, we also have that $1+\sum_{blocks} b-1 = n$, where $n = |\setS|$.
Therefore, $\sum_{blocks} 2^{b-1} \leq 2^{n-1}$, which can be rewritten as $\sum_{blocks} 2^b \leq 2^n$, where $2^n$ is the number of subsets evaluated by {\dpsub}. Finally,  the time complexity for each set $\setS$ from $O(2^n)$ to $O(B*2^b)$, where $n = |\setS|$, $b$ is the max block size ($b \leq n$), and $B$ is the number of blocks in subgraph $\setS$.
\end{proof}

\begin{lemma} \label{lemma:generic:once}
All the {\ccpPairs} corresponding to the connected set $\setS$ are enumerated only once.
\end{lemma}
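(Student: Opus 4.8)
The plan is to promote the block-maximality argument already used in the proof of Lemma~\ref{lemma:generic:all} into a uniqueness statement. Fix an arbitrary {\ccpPair} $(\setS_{left}, \setS_{right})$ of the connected set $\setS$. By Lemma~\ref{lemma:generic:all} it is produced at least once by Algorithm~\ref{algo:mpdp_generalization}; it remains to show it is produced at most once. A single output of the two nested inner {\forloop}s is determined by the pair $(b, lb)$, where $b \in BLOCKS$ is a block of the subgraph induced by $\setS$ and $lb \subset b$ is the chosen block-subset (with $rb = b \setminus lb$ and $\setS_{left} = grow(lb, \setS \setminus rb)$, $\setS_{right} = \setS \setminus \setS_{left}$). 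So it suffices to show there is exactly one such pair $(b, lb)$ that yields $(\setS_{left}, \setS_{right})$.

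First I would argue that the block $b$ is forced. If $(\setS_{left}, \setS_{right})$ is produced from a block-level {\ccpPair} $(lb, rb)$ in a block $b$, then the fourth CCP condition on $(lb, rb)$ gives an edge between a node of $lb \subseteq \setS_{left}$ and a node of $rb \subseteq \setS_{right}$; hence $b$ contains a cross edge between $\setS_{left}$ and $\setS_{right}$. Exactly as in Lemma~\ref{lemma:generic:all}, the connectivity of $\setS_{left}$ and of $\setS_{right}$ forces all such cross edges to lie in one and the same block $b^{\star}$: two cross edges in distinct blocks, joined through $\setS_{left}$ and through $\setS_{right}$, would close a cycle straddling those blocks, contradicting the maximality of a block. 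Since every edge belongs to a unique block, $b = b^{\star}$, which is determined by $(\setS_{left}, \setS_{right})$ alone.

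Next I would argue that, given $b^{\star}$, the subset $lb$ is forced. From $\setS_{left} = grow(lb, \setS \setminus rb)$ and the fact that the \emph{grow} function (Section~\ref{sec:grow}) always returns a superset of its source set, we get $lb \subseteq \setS_{left}$, and symmetrically $rb \subseteq \setS_{right}$. Combined with $lb \cup rb = b^{\star}$ and $\setS_{left} \cap \setS_{right} = \emptyset$, a one-line set computation gives $lb = b^{\star} \cap \setS_{left}$ and $rb = b^{\star} \cap \setS_{right}$, again determined by $(\setS_{left}, \setS_{right})$. Since \emph{grow} is deterministic, the iteration $(b^{\star}, lb)$ is the unique one that can output $(\setS_{left}, \setS_{right})$, so this {\ccpPair} is enumerated exactly once. (The symmetric {\ccpPair} $(\setS_{right}, \setS_{left})$ is likewise produced exactly once, from the iteration $(b^{\star}, rb)$; this is consistent with {\ccpCount} counting symmetric pairs separately.) The only genuinely delicate step is the block-maximality claim, but it is precisely the fact isolated in the proof of Lemma~\ref{lemma:generic:all}, so the present proof can simply invoke it; the remaining set-algebra is routine.
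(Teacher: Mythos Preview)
Your proof is correct and follows essentially the same approach as the paper: first, the block is uniquely determined because all cross edges between $\setS_{left}$ and $\setS_{right}$ lie in a single block (the maximality argument from Lemma~\ref{lemma:generic:all}); second, within that block, $lb = b^{\star} \cap \setS_{left}$ is forced. The paper's proof is terser---it directly asserts $lb = S_{left} \cap block$ (a fact already established inside the proof of Lemma~\ref{lemma:generic:all}) and derives the contradiction---while you rederive this containment from the behavior of \emph{grow}, but the substance is identical.
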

\begin{proof}
Since we've proven that each {\ccpPair} has edges connecting the two parts in only one block, it is impossible that the same {\ccpPair} is enumerated starting from different blocks. 
Furthermore, two different {\ccpPairs}, $(lb,rb)$ and $(lb',rb')$, within the same block will produce different {\ccpPairs} in $\setS$, say \linebreak  $(S_{left}, S_{right})$ and $(S'_{left}, S'_{right})$. 
By contradiction, if $S_{left} = S'_{left} \land S_{right} = S'_{right}$, then also $lb = S_{left} \cap block = S'_{left} \cap block = lb' \land rb = S_{right} \cap block = S'_{right} \cap block = rb'$, which is a contradiction.
\end{proof}

\begin{lemma} \label{lemma:generic:opt}
In {\mpdp}, if all blocks in the graph are fully connected, then {\innerCounter} = {\ccpCount}.
\end{lemma}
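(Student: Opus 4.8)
The plan is to show that, under the hypothesis, the CCP block of Algorithm~\ref{algo:mpdp_generalization} (lines~\ref{alg3-line:start-block}--\ref{alg3-line:end-block}) never triggers a \textbf{continue}: every pair $(lb,rb)$ produced in the innermost loop at line~\ref{alg3-line:iterate-block} is already a {\ccpPair} for its block. Since {\innerCounter} is incremented exactly once per iteration of that loop and {\ccpCount} is incremented precisely on the iterations that survive the CCP block, establishing this equivalence gives {\innerCounter} $=$ {\ccpCount} directly.

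First I would reduce the global assumption to the subgraphs actually enumerated: Find-Blocks($\setS$) returns the blocks of the induced subgraph $G[\setS]$, so I need that each such block is a clique whenever every block of $G$ is. If $B'$ is a block of $G[\setS]$, then it is a maximal nonseparable subgraph of $G[\setS]$ and hence a nonseparable subgraph of $G$, so it lies inside a unique block $B$ of $G$; since $B' \subseteq \setS$, the edges of $G[\setS]$ within $B'$ are exactly those of $G$ within $B'$, and as $B \supseteq B'$ is complete, $B'$ is complete in $G[\setS]$. Now fix such a clique block $B'$ and a pair $(lb,rb)$ with $lb$ a nonempty proper subset of $B'$ (matching the strict ``$\subset$'' at line~\ref{alg3-line:iterate-block}) and $rb = B' \setminus lb$ (line~\ref{alg3-line:rb}). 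Then $lb,rb \neq \emptyset$ and $lb \cap rb = \emptyset$ by construction (lines~\ref{alg3-line:start-block} and~\ref{alg3-line:disjoint} pass); $lb$ and $rb$ each induce a complete, hence connected, subgraph (lines~\ref{alg3-line:connectedl} and~\ref{alg3-line:connectedr} pass); and for any $u \in lb$, $v \in rb$ the edge $(u,v)$ exists in $B'$, so $lb$ is connected to $rb$ (line~\ref{alg3-line:end-block} passes). Hence {\ccpCount} is incremented on every iteration, as required.

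The only subtle step — and the one I expect to need the most care — is the reduction in the middle: verifying that a block of an induced subgraph is contained in a block of the ambient graph and that full connectivity is inherited by it; the remaining condition checks are routine. Finally, I would note the payoff: combining with Lemma~\ref{lemma:generic:complexity} and Lemma~\ref{lemma:generic:all} (so that {\mpdp}'s {\ccpCount} already equals the true number of {\ccpPairs}, exactly as for {\dpsub}), this shows {\mpdp} performs zero wasted {\joinpair} evaluations on graphs all of whose blocks are cliques — in particular on trees, where each block is a single edge, and on cliques, where there is a single block — thereby subsuming the tree result.
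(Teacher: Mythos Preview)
Your proposal is correct and follows the same core argument as the paper: once a block is a clique, every complementary pair $(lb,rb)$ automatically satisfies all four CCP conditions, so no iteration is filtered and the two counters coincide. The paper's proof is a three-line version of exactly this observation.

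Where you differ is in rigor, not approach. The paper simply asserts ``if a block is fully connected, then all pairs $(lb,rb)$ are {\ccpPairs}'' without addressing the gap you flag: the hypothesis speaks of blocks of the ambient graph $G$, while Find-Blocks($\setS$) returns blocks of the induced subgraph $G[\setS]$. Your reduction step --- that a block of $G[\setS]$ is nonseparable in $G$ and hence sits inside a (complete) block of $G$, so inherits completeness --- is the one nontrivial ingredient the paper omits, and your argument for it is sound. The remainder of your proof (checking lines~\ref{alg3-line:start-block}--\ref{alg3-line:end-block} one by one) is just a more explicit rendering of the paper's sentence.
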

\begin{proof}
If a block is fully connected, then all pairs of disjoint subsets $(lb,rb) \in block$, such that $lb \cup rb = block$, are {\ccpPairs}. 
In fact, the full connectivity of the block implies that also $lb$ and $rb$ are connected, and that $lb$ is connected to $rb$.
Therefore, all enumerated {\joinpairs} are also {\ccpPairs}.
\end{proof}

\fullversion{
\reminder{ May be retained only if needed}

\subsection{Parallelization}

As previously discussed, parallelizing a DP algorithm that enumerates join pairs by levels of increasing size is trivial. In fact, all sets $\setS$ of the same size can be processed in parallel without dependencies. 
Therefore, we can simply execute the for-loop at Line~\ref{alg3-line:SetS} in parallel.

In our multi-threaded CPU implementation, there is a shared atomic counter which starts at 0 and represents the index of the next set $\setS$ to unrank. Worker threads atomically fetch and increment this counter by a fixed value $N$, which represents the size of the work item, storing it in a local counter $i$. Then, for each index from $i$ to $i+N-1$, they unrank the corresponding set using the combinatorial unranking algorithm presented by Meister et al.~\cite{Meister2017}, find its best {\joinpair}, and save the resulting plan in a local list. When a thread finishes its work for this level, it stores all plans from its list to the memo table, protected by a mutex. 
The memo table has one hash-table for each size, so that inserting elements in one of the hash-tables does not invalidate references to join relations of lower size.

In our experiments, the overhead of synchronization through the atomic counter and the mutex were negligible compared to the set evaluation phase. Other approaches, like statically splitting the search space, could yield unbalanced work items. 
}

\section{Heuristic Solutions}
\label{sec:approx}

Although {\mpdp} is efficient, parallelizable and runs well on GPUs, join order optimization is an NP-Hard problem.  The time taken for optimization increases exponentially. Hence, for very large joins, we need to apply a heuristic optimization technique. \eat{We use the well known Iterative Dynamic Programming (IDP)~\cite{Kossmann2000} technique and integrate {\mpdp} with it. \todo{rewrite}} \new{We propose two heuristic solutions: 1) augmenting {\mpdp} into an existing heuristic technique, IDP; 2) a novel join-graph conscious heuristic, {\uniondp}.}

\subsection{Iterative Dynamic Programming}
\label{sec:idp2}

\fullversion{
\begin{algorithm}
	\caption{: IDP$_2$}
	\label{algo:idp2_goo}	
	\renewcommand{\algorithmicrequire}{\textbf{Inputs:}}
	\renewcommand{\algorithmicensure}{\textbf{Output:}}	
	\begin{algorithmic}[1]
		\REQUIRE $G = (V, E)$: query graph\\
		         $k$: maximum size of re-optimized subtree\\
		         $dp$: optimal join order DP algorithm\\
		         $heu$: fast heuristic algorithm\\
		\STATE $T \leftarrow heu(V, G)$ // join tree found by heuristic
		    \label{line:idp2:heuristic}
		\WHILE{$|T| > 1$} \label{line:second_part_start}
		    \STATE pick $T'$ subtree of $T$ such that: $1 < |T'| \leq k$ and $C(T')$ is maximal
		    \STATE // optimize join order of relations in $T'$ using $dp$
		    \STATE replace $T'$ in $T$ with $dp(T', G)$ as a temporary table
		    \STATE // optionally, stop whenever a budget is exhausted (e.g. time)
		\ENDWHILE \label{line:second_part_end}
		\STATE expand all temporary tables in $T$ back to their join tree
		\RETURN $T$
	\end{algorithmic}
\end{algorithm}
}

\new{Kossmann et al.~\cite{Kossmann2000} proposed two versions of Iterative Dynamic Programming (IDP) techniques. The first one, IDP$_1$, initially  builds join order plans up to a given number of relations $k$ using the exhaustive algorithm, picks the lowest cost plan for $k$ relation joins, materializes it and then uses it as a single relation for subsequent iterations.
However, IDP$_1$ has a time complexity of $O(n^k)$, making it viable only for small values of $k$ and $n$. Optimizing large queries requires a too small value of $k$, that negatively impacts the quality of plans.}

The second version of IDP, IDP$_2$, applies the heuristic a priori, first generating a tentative plan and then optimizing it.
It is made up by two components: 
(1) \emph{Initial Join Order: } A heuristic algorithm to build an initial join plan (or join order). 
(2) \emph{Iterative DP: } The constructed join tree in the above step  is the input to this component. The idea is to use an optimal join order DP algorithm to optimize the most costly parts of the join plan. At each step, the most costly subtree up to size $k$ is selected for optimization, $dp$ is run on its relations to find the optimal plan $T'$, and finally it is replaced in $T$ by a single temporary table, therefore reducing the size of $T$ by $|T'|-1 \in [1, k-1]$. 
The loop will run until only one temporary table representing the whole query remains in $T$. 
At this point, all temporary tables are reverted to their optimal tree form before returning $T$.
Note that it is also possible to stop the while loop at any iteration and obtain an acceptable plan, based on a given time budget. Its time complexity is $O(n^3)$, if $n >> k$.

\subsubsection{IDP$_2$ with MPDP}
\label{sec:mpdp-idp}
\new{
MPDP can be incorporated into IDP$_2$ by replacing the $dp$ algorithm  with {\mpdp}. We use IDP$_2$, since it performs better than IDP$_1$ for very large join queries. 
\fullversion{Applying {\mpdp} to $IDP_2$ does not require any change to the structure of $IDP_2$.} 
Also the advantage of using {\mpdp} inside $IDP_2$, instead of another exact DP algorithm on CPU, is that it 
allows for a bigger $k$ for the same planning time. 
\footnote{In our experiments on snowflake schema, we were able to use $k$ value of up to 25 for GPU accelerated {\mpdp}, with the optimization time at 100 tables being 550ms.} 
This is beneficial since the algorithm explores a much larger search space and, therefore, it may be able to find a better plan.
{\mpdp} is called from within $IDP_2$ with the correct subset of the query information that is to be processed at this state. }
\subsection{\uniondp}
\label{sec:uniondp}
\new{ The main issue with IDP$_2$ is that, due to the initial plan choice and its greedy nature, it might get stuck on a poor local optima, resulting in suboptimal plan choices. Hence, we design a novel heuristic, {\uniondp}, that for the \emph{first-time} leverages the graph topology for such large queries. 

The key idea of {\uniondp} is to partition the graph into tractable sub-problems, solve each of the sub-problems with {\mpdp} optimally, then, recursively build the solution to  the original problem from these sub-problems. In order to produce quality plans in reasonable times, the challenge would be to satisfy the following requirements:
\begin{enumerate}[noitemsep,topsep=0pt,leftmargin=*]
    \item \emph{Partition Size:} The size of each partition should be less than a threshold value such that the partition can be optimized efficiently  by {\mpdp}. Note that all the partition sizes ideally should be close to the threshold value. If the partition sizes are too less than the threshold, then, this possibly increases the optimization time and may results in lesser quality plans as partitions inhibits search space exploration.  
    \item \emph{Weight of Cut Edges:} We assign the weight of edges to be cost (using a cost model) of joining the relations across the edge. The sum of weight of cut edges of the partitions needs to be as high as possible. 
    This is because more costly join needs to be as late as possible in the plan tree following the convention that higher selectivity predicates are applied earlier in the plan tree.  This requirement typically trades-off with (1). 
\end{enumerate}

}

\begin{algorithm}
\setstretch{0}
    \new{
	\caption{\uniondp}
	\label{algo:uniondp}	
	\renewcommand{\algorithmicrequire}{\textbf{Inputs:}}
	\renewcommand{\algorithmicensure}{\textbf{Output:}}	
	\begin{algorithmic}[1]
		\REQUIRE $G = (R, E)$: query graph\\
		         $k$: maximum number of relations in a partition\\
		         
        \IF {$nRels(G) \leq k$} \label{alg4-line:end_condition_start}
        \STATE $T \leftarrow \textit{\mpdp}(G) $ // optimal sub-plan found by {\mpdp} 
        \RETURN $T$
        \ENDIF \label{alg4-line:end_condition_end}
        \STATE For any edge, assign leftRelSet and rightRelSet to be set of rels across the edge
        \STATE assignEdgeWeights() //Assign edge weights using the cost model 
        \label{alg4-line:assign_weights}
        \STATE \textit{makeSet}(G) // create a disjoint set for each relation in graph
        \label{alg4-line:makeset}
        \FORALL {edges in increasing order of $size$(leftRelSet + rightRelSet) }
        \label{alg4-line:edges_itr}
        \STATE //Use weights of edges in case of tie for the above
        
        \IF {$size$(leftRelSet) + $size$(rightRelSet) $\leq k$} 
        \label{alg4-line:union_condition}
        \STATE \textit{Union}(leftRelSet, rightRelSet) \label{alg4-line:union}
        \ENDIF
        \ENDFOR
        \STATE /*End of Partition Phase */
        \FORALL {induced subgraphs of the disjoint sets}
        \STATE $T' \leftarrow {\mpdp}(subgraph) $ 
        \label{alg4-line:sub_optimize}
        \STATE createCompositeNode($T'$)
        \label{alg4-line:create-composite}
        \ENDFOR
        
        \STATE $G'\leftarrow$ Graph(CompositeNodes, Cut-Edges across partitions)
        \label{alg4-line:composite_graph}
		\RETURN \textit{UnionDP}($G'$)
		\label{alg4-line:recurse}
	\end{algorithmic}}
\end{algorithm}

\new{

\fullversion{
\subsection{Graph Conscious Heuristic}
 A graph conscious heuristic can use characteristics like edge weights as a guide towards more optimal global plans.

Given a query graph $G = (V,E)$ with R$_n$ relations as vertices, and edges representing all possible joins between them; we recursively partition it into P$_i$ tractable components, optimize them with the use of a DP algorithm, and shrink the original graph for the next recursive step; thus achieving scalability for very large joins. On the final step we merge their respective join plans into a global solution. \textbf{Throughout this text, partitions, relation sets and disjoint sets are used interchangeably}. In order for the DP algorithm to construct valid join plans for each partition, the relations in them must induce disjoint query graphs. Meaning that (1) a relation can not belong to multiple partitions, (2) all induced subgraphs must be connected, and (3) all relations of the graph must belong to a partition. Finally, in order for a partition not to bottleneck the optimization time of the algorithm, (4) the maximum size of a partition $\leq$ upper threshold ($t$) $t\epsilon[1,k]$, where $k$ is the maximum number of relations our DP algorithm can efficiently optimize.

A good heuristic, will try to balance the partition sizes close to the upper threshold $t$, while also maximizing the weight of the cut-edges connecting them. The edge weights in the graph are given from our cost model, while cut-edges are defined by their left and right relations belonging to a different partition. \textbf{Maximizing the cut-edge weights, essentially leaves bad joins to be handled at later levels of recursion and thus be closer to the root of the final join plan.}
}

\subsubsection{\textbf{Algorithm Description}}
\fullversion{In our implementation we use the Union-Find data structure, to efficiently build our partitions. We improve the $O(n^2)$ complexity IDP$_2$ has for finding the highest cost subtree, to $O(nlogn)$ for finding all relations sets which are given to the DP algorithm. The DP algorithm is still the major bottleneck in the overall optimization time at $O(3^k)$ complexity, so lowering the optimization time is mostly a factor of lowering the number of relations $k$ given to the DP algorithm.}

Algorithm~\ref{algo:uniondp} captures the pseudocode of {\uniondp}. The plans are built bottom up from the graph partitions recursively until the entire plan is constructed. If the number of relations is less than $k$, then we use {\mpdp}. The algorithm assigns weights to each edge based on a cost model (Line~\ref{alg4-line:assign_weights}), and the relations on either side of the edge are represented by \textit{leftRelSet} and \textit{rightRelSet}, respectively. Our algorithm uses the UnionFind data structure to maintain the partition information over relations, and for efficient \textit{find} and \textit{union} set operations. These sets are initialized to individual relations (Line~\ref{alg4-line:makeset}). 

Once the initialization is done, we traverse all the edges in increasing size of the sum of \textit{leftRelSet} and \textit{rightRelSet} relations  (Line~\ref{alg4-line:edges_itr}). Ties are broken by increasing weight of edges. \textit{leftRelSet} and \textit{rightRelSet} sets of the chosen edge will be unioned to the same set/partition if the size of their union is less or equal to $k$ (Line~\ref{alg4-line:union_condition}). At the end of this phase, called as \emph{partition phase}, size of all partitions are less than or equal to $k$. Then, all these partitions are individually optimized using {\mpdp} (Line~\ref{alg4-line:sub_optimize}). A new graph $G'$ is created with composite nodes (for each partition) and cut edges across partitions as its edge set  (Line~\ref{alg4-line:create-composite}). The above procedure is repeated over $G'$ until $|G'| <= k$ i.e. the size which {\mpdp} can handle efficiently (Line~\ref{alg4-line:end_condition_start}). This recursive idea helps {\uniondp} scale to 1000s of relations.

\fullversion{\reminder{add example graph}}

\fullversion{
\subsection{UnionDP with {\mpdp}}

The advantage of using {\mpdp}, instead of another exact DP algorithm on CPU, is that it allows for a bigger k for the same planning time. This is beneficial since the algorithm will explore a much bigger space and, therefore, it may be able to find a better plan. In our experiments, we were able to use a value of $k = 25$ for GPU accelerated  {\mpdp}, and found that an upper threshold $t = 15$ for the partition size significantly lowers optimization time while not missing on optimization quality, optimizing 100 tables at $154ms$.} }
\section{{\mpdp}:  GPU Implementation}
\label{sec:gpu_algo}

In this section, we present details for GPU implementation  of {\mpdp}. 
GPUs provide a much higher degree of parallelism compared to multi-core CPUs. 
Recall that a DP-based optimization algorithm happen  in  several  levels,  finding the best plan at each level. Since  the  data  and  catalog  are  usually  resident  in  CPU, it calls functions  in  the  GPU  to  find  all  the  best  subplans  for every level $i$ repeatedly until the best plan is found. 

In our implementation, sets of relations (including adjacency lists of base relations) are represented using a fixed-width bitmap sets. The memo table is implemented using the fast Murmur3 hashing algorithm (a simple open-addressing hash table). 
Algorithm~\ref{algo:mpdp_gpu} shows the general workflow of  {\mpdp} 
on GPUs. \fullversion{Differently from the usual {\dpsub} algorithms, here sets are evaluated one level  at a time, like {\dpsize}, to avoid dependencies between sets processed in parallel.}
The memo table is initialized at Line~\ref{algo:generic_dpsub:memo}, then the memo table is
filled with the values derived from the base relations (Line~\ref{algo:generic_dpsub:fillmemo}) before starting the 
iterations (Line~\ref{algo:generic_dpsub:iter}). 
Each iteration is composed of the following steps:

\begin{algorithm}
	\setstretch{0.1}
	\caption{: MPDP on GPU}
	\label{algo:mpdp_gpu}	
	\begin{algorithmic}[1]
		\REQUIRE $QI$: Query Information
		\ENSURE Best (least cost) Plan
		\STATE $memo$ $\leftarrow$	Empty hashtable (key: relation id as bitmapset)
		\label{algo:generic_dpsub:memo}	
		\FORALL{$b \in QI.baseRelations$} \label{algo:generic_dpsub:fillmemo}	
		\STATE add $(b.id, b)$ to $memo$
		\ENDFOR
		\FOR{$i:=2$ to $QI.querySize$} \label{algo:generic_dpsub:iter}	
		\STATE $unrank$ all possible sets of size $i$ (Set $S_i$ in {\mpdp})
		\STATE $filter$ out not connected sets
		\STATE $evaluate$ all {\joinpairs} evaluated by {\mpdp}
		\STATE $prune$ retain the best one for each $\setS$ (optional)
		\STATE $scatter$ $(set,bestJoin(set))$ to $memo$
		\ENDFOR
		\RETURN best plan for query from $memo$
	\end{algorithmic}
\end{algorithm}

\emph{Unrank}.
All possible sets of relations of size $i$, corresponding to set $S_i$ in {\mpdp} (Line~\ref{alg3-line:Si} of Algorithm~\ref{algo:mpdp_generalization}), are unranked using a combinatorial schema as in~\cite{Meister2020}, and stored in a contiguous temporary memory allocation, which can be reused in successive iterations.

\emph{Filter}.
 All sets in $S_i$, that are not connected are filtered, thereby compacting the temporary array. This phase can be implemented using one of the many stream compaction algorithms for GPU, e.g. \texttt{thrust::remove}. 

\emph{Evaluate}.
The evaluation of {\joinpairs}, corresponding to Algorithm~\ref{algo:mpdp_generalization}, are performed with warp-level parallelism (one warp per set), using the parallel version of Find-Blocks ~\cite{bfs-bicc}, that finds all  blocks in $\setS$.
The warp first finds all the blocks for the given set, then each thread works on a different {\joinpair}. \new {Later, each thread unranks the blocks, checks validity and computes the cost. } 

\emph{Prune (optional)}.
In this step only the best {\joinpair} for each $\setS$ is kept. It can be easily implemented using any reduce-by-key algorithm.
\new{It is performed inside the warp in a parallel fashion using a classical warp reduction.}

\emph{Scatter}
All key-value pairs $(\setS, bestJoin(\setS))$ are saved in the memo table to be used in future iterations, \new{which is a parallel store on the GPU hash table}. 

Finally, the best plan is returned by fetching from the GPU memo table.  The final relation
is recursively fetched using its left and right join relations, building a join 
tree in CPU memory that can then be passed to {\postgres} as a schema to generate the 
final plan. 

\subsubsection*{\textbf{Enhancements}}

\new {The presented algorithm is inspired by the COMP-GPU algorithm from Meister et. al~\cite{Meister2020} and could be used to implement on GPU.
 The main difference with previous work is to propose the following enhancements:}

\emph{Reducing the number of global memory writes.}
Having a separate pruning phase, which runs in a separate kernel, requires storing the plans found by the threads in global memory, in order to perform the subsequent pruning step.
To remove this additional overhead, our implementation prunes the found plans in shared memory at the end of the \emph{evaluate} phase, so that only one write to global memory per warp is required, with the best plan for the set evaluated by the warp.
No separate pruning step is required.

\new{\emph{Avoiding 'If' branch divergence. }
In order to filter out invalid {\joinpairs},  the trivial solution would be to just use an \textbf{if} condition, but this would again cause branch divergence -- a major cause for performance degradation in GPUs.
Here, threads who find an invalid {\joinpairs} will stall until the other threads in the warp finish their execution, due to the SIMD nature of the GPU multi-processors.
We handle this issue by using Collaborative Context Collection~\cite{ccc} to prevent excessive in-warp divergence.} The basic idea is to defer work by stashing it in shared memory until there is enough work for all threads in the warp, either from enumerating new pairs or from the stash.

\fullversion{
\subsubsection{Improve unrank}
The \emph{unrank} phase can be improved by using a combinatorial schema which satisfies a lexicographical order (not shown here for sake of brevity) to unrank the first set of each thread, and then obtain subsequent sets using a O(1) algorithm for obtaining the next lexicographical bit permutation
\footnote{Specifically, the one showed in this webpage:
\url{https://graphics.stanford.edu/~seander/bithacks.html\#NextBitPermutation}}.
However, this has a negligible effect in case an expensive cost function
}

\section{Related Work}
\label{sec:relwork}

Join Order Optimization has been a well studied area, with over four decades of research work.  The importance of large queries with 1000 relations has been discussed in \cite{dieu20091, chen2009partial}. Further, they also showcase the inability of existing optimizers in handling such scenarios. We categorize the prior work into optimal algorithms and approximate (or heuristic) solutions. 

\subsubsection*{\textbf{Optimal Algorithms:}} 
One of the first approaches for join order optimization algorithm was {\dpsize}~\cite{selinger} -- which is currently used in many open-source and commercial databases, like {\postgres} and IBM DB2\cite{db2}, respectively.  {\dpsub}~\cite{dpsub}  uses a subset driven way of plan enumeration (detailed in Section~\ref{sec:bg}). 
Although {\dpsize} and {\dpsub}, depending on query join graph topology, evaluate a lot of unnecessary {\joinpairs}, they are parallelizable.
\fullversion{ However, on the positive side both these algorithms are highly parallelizable.  A detailed analysis of these two algorithms in terms of {\joinpairs} evaluated can be seen in ~\cite{dpccp}. Further,} 
Moerkette et al.~\cite{dpccp}  propose the {\dpccp} algorithm,  which uses a join graph based enumeration, and evaluates only {\ccpPairs}. {\dpccp} outperforms  both {\dpsize} and {\dpsub} while not considering cross products. However, due to dependencies between {\joinpairs} while plan enumeration, {\dpccp} is difficult to parallelize. Our work, {\mpdp} leverages massive parallelizability aspect of {\dpsub} while minimizing redundant evaluation of {\joinpairs} from {\dpccp}. 
A generalized version of {\dpccp}, \texttt{DPHyp}, has been developed by the same authors~\cite{dphyp}, which also consider hypergraphs, in order  to handle non-inner and anti-join predicates as well. Handling such cases is part of future work.

 With the rise of multicore architectures, parallel approaches for classical DP algorithms have been developed.
{\pdp}~\cite{HanKLLM08} discusses a CPU parallel version of {\dpsize}. Although, it scales up well,  its performance is hindered by the evaluation of a lot of invalid {\joinpairs}.
{\dpe}~\cite{HanL09} proposes a parallelization framework that can parallelize any DP algorithm, including {\dpccp}, based on a producer-consumer paradigm. {\joinpairs}  are enumerated in a producer thread and stored in a dependency-aware buffer, while consumers extract the {\joinpairs} from the buffer and compute their cost based on a cost model. Due to the sequential enumeration and the additional reordering step, its parallelizability is limited. 
Trummer et al. in \cite{trummer2016} proposes a novel plan space decomposition for join order optimization using shared-nothing architectures over large clusters. The main issue is that it is built on top of {\dpsize} that enumerates a lot of invalid {\joinpairs} in realistic settings, and hence does not scale to large number of relations. 
More recently, Meister et al.~\cite{Meister2020} proposed GPU versions of {\dpsize} and {\dpsub} algorithms. 

\new{We use \cite{Meister2020}'s basic GPU implementation structure which has unrank,  filter, evaluate, prune and scatter phases. We propose enhancements over \cite{Meister2020} such as  reducing global writes by not having a separate pruning phase, and avoiding branch divergence due to 'if' condition by using Collaborative Context Collection.}

\subsubsection*{\textbf{Heuristic Solutions:}} Due to the NP-hard nature of join order optimization, there have been approaches that use heuristic solutions. Some techniques look at only a limited search space of query plans.
\texttt{IKKBZ}~\cite{ikkbz1,ikkbz2} limits the search space to left-deep join trees only.
  Similarly, Trummer and Koch formulate the join order optimization problem as a Mixed Integer Linear Programming (MILP) problem~\cite{milp}. 

Techniques such as \texttt{GOO}~\cite{Fegaras1998} and \texttt{min-sel}~\cite{min-sel} greedily choose the best subplans to find the query plan.

 \texttt{IDP}~\cite{Kossmann2000}, introduces a new class of algorithms, called Iterative Dynamic Programming which we have discussed in Section~\ref{sec:idp2}.  
More recently, Neumann et al.~\cite{Neumann2018} proposed a new technique to reduce the search space, called linearized DP ({\lindp}), which runs a DP algorithm to optimize the best left-deep plan found by IKKBZ. In order to handle large queries, they propose an adaptive optimization technique. Their technique employs {\dpccp} for small queries (<14 tables), linearized DP for medium queries (between 14 and 100), and IDP$_2$ with linearized DP for large queries (>100 tables).

There are also randomized algorithms proposed based on Simulated Annealing and Iterative Improvement~\cite{randomized1}, Genetic Algorithms~\cite{bennett1991genetic,HorngLK94}, Random Sampling~\cite{Waas2000}. Some recent work such as \cite{marcus2021bao, krishnan2018learning} use machine learning techniques for query optimization. 
Primary issue with these approaches are that either they do not scale well for large join queries considered in this work, or produce low quality solutions \cite{Neumann2018}.

\section{Experimental Results}
\label{sec:expt}

In this section, we discuss the experimental evaluation of both {\mpdp}'s optimal algorithm and its heuristic solution on different  workloads, while comparing it against corresponding state-of-the-art algorithms. \new{We note that the focus of this paper is on optimization of large queries, i.e, join of 10 or more relations.}

\subsection{Experimental Setup}
\label{sec:exp_setup}
For our experiments, we use a server with dual Intel Xeon E5-2650L v3 CPU, with each CPU having 12 cores and 24 threads with 755GB of RAM. We use a Nvidia GTX 1080 GPU for running GPU based join algorithms. \new{For the experiments in Section~\ref{sec:expt:aws}, we use Amazon AWS~\cite{aws}.}

We have implemented all the join order optimization techniques ({\mpdp} and all baselines) in the {\postgres} 12 \cite{postGreSql} engine. 
Since implementing these algorithms (plus GPU-specific implementation) require code changes to the optimizer module, we cannot provide experimental results on commercial databases. 

\emph{Cost Model:} The cost model used by a query optimizer plays an important role in determining the optimization time. While  recent works such as \cite{Neumann2018}, have used a cost model based on output size of different operators, i.e. $c_{out}$, we use a more realistic cost model which is close to the one used by {\postgres}. 
For the suite of queries considered in this paper,  our cost model  returns nearly the same cost as {\postgres} (within 5\% in the worst case). 
\footnote{\new{We do not use the original cost model of {\postgres} since we only consider inner equi-joins while {\postgres} cost model covers a lot more cases (e.g. outer joins, inequality joins as well as degree of parallelism > 1). Using the exact {\postgres} cost model would require us to rewrite from scratch over 20 thousand lines of cost model code from {\postgres} for GPU based execution, which is beyond the scope of the paper.
\fullversion{GPUs have different execution models compared to CPUs, therefore CPU data structures are not as efficient in GPU and thus functions for GPUs need to be rewritten and tuned.}}
}
 
\subsection{Optimal Algorithm Evaluation}
\label{sec:expt_mpdp}
Our goal for the experiments in this section is to see how our optimal {\mpdp} algorithm (both CPU and GPU based implementations) performs compared to other optimal DP algorithms. Since all algorithms produce the optimal plan, we just compare their optimization times. We use both synthetic and real-world workloads for the evaluation.  Note that the size of the dataset does not make much difference to the optimization time.

We use the following baselines in our experiments.
 
\begin{itemize}[noitemsep,topsep=0pt,leftmargin=*]
    \item {\texttt{Postgres (1CPU)}}: {\dpsize} based join ordering implemented by {\postgres} running on 1 CPU core.  
    \item {\dpccp\texttt{ (1CPU)}}: State-of-the-art CPU Sequential DP algorithm, {\dpccp}~\cite{dpccp}, running on 1 CPU core. 
    \item {\texttt{DPE (24CPU)}}: State-of-the-art CPU parallel algorithm, {\dpe}. We use the parallel version of {\dpccp} \cite{HanL09} running on 24 cores.
    \item {\texttt{DPSub (GPU)}}: State-of-the-art GPU based DP algorithm~\cite{Meister2020} using {\dpsub}. The COMB-GPU version from \cite{Meister2020} is used.

    \item {\texttt{DPSize (GPU)}}: Another state-of-the-art GPU based DP algorithm~\cite{Meister2020} using {\dpsize}. The H+F-GPU version is used here.
\end{itemize}
Other techniques such as sequential or parallel versions of {\dpsize} and {\dpsub} on CPU run much slower than their GPU variants, and are hence omitted to make the graphs easier to read. 
Similarly, {\dpe} performs better than {\pdp}~\cite{HanKLLM08}, we skip {\pdp} as well. 

We set a timeout of 1 minute for the total optimization time, and report the average optimization times across several queries of each query size. 
\new{For joins with less than 10  relations {\mpdp} (GPU) does not perform that well  because of data transfers cost between CPU and GPU for every level in the DP lattice. In terms of absolute values, for such small queries the optimization times are usually less than 10ms for all techniques including {\mpdp} (GPU). }

\fullversion{
\begin{figure}
\centering
\includegraphics[width=0.45\textwidth,keepaspectratio=true]{result/chain.pdf}
\caption{Optimization times on chain graph}
\label{fig:chain}
\end{figure}
}

\begin{figure}
\centering
\includegraphics[width=0.35\textwidth,keepaspectratio=true]{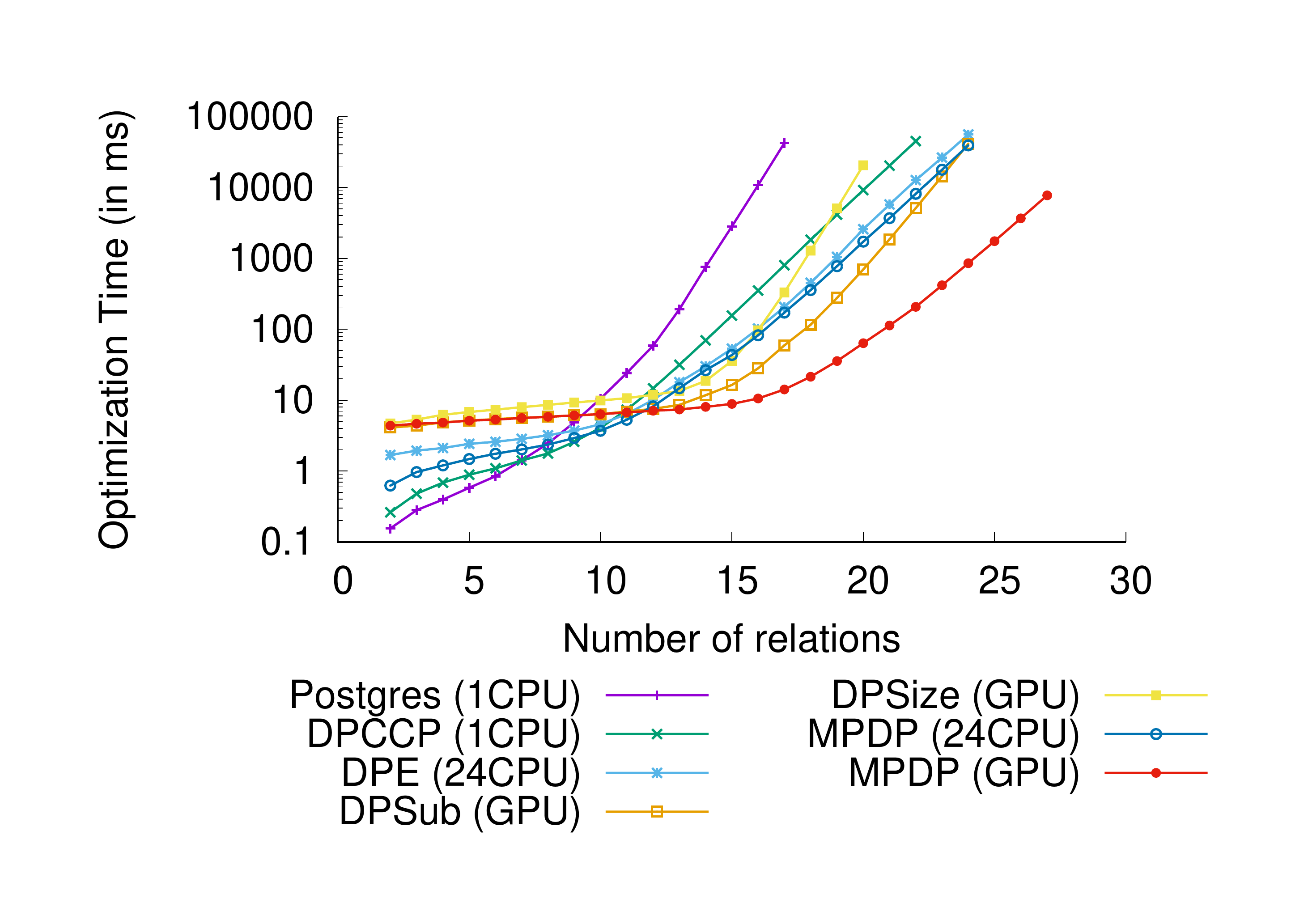}
\caption{Optimization times on star graph
}
\label{fig:star}
\end{figure}

\begin{figure}
\centering
\includegraphics[width=0.35\textwidth,keepaspectratio=true]{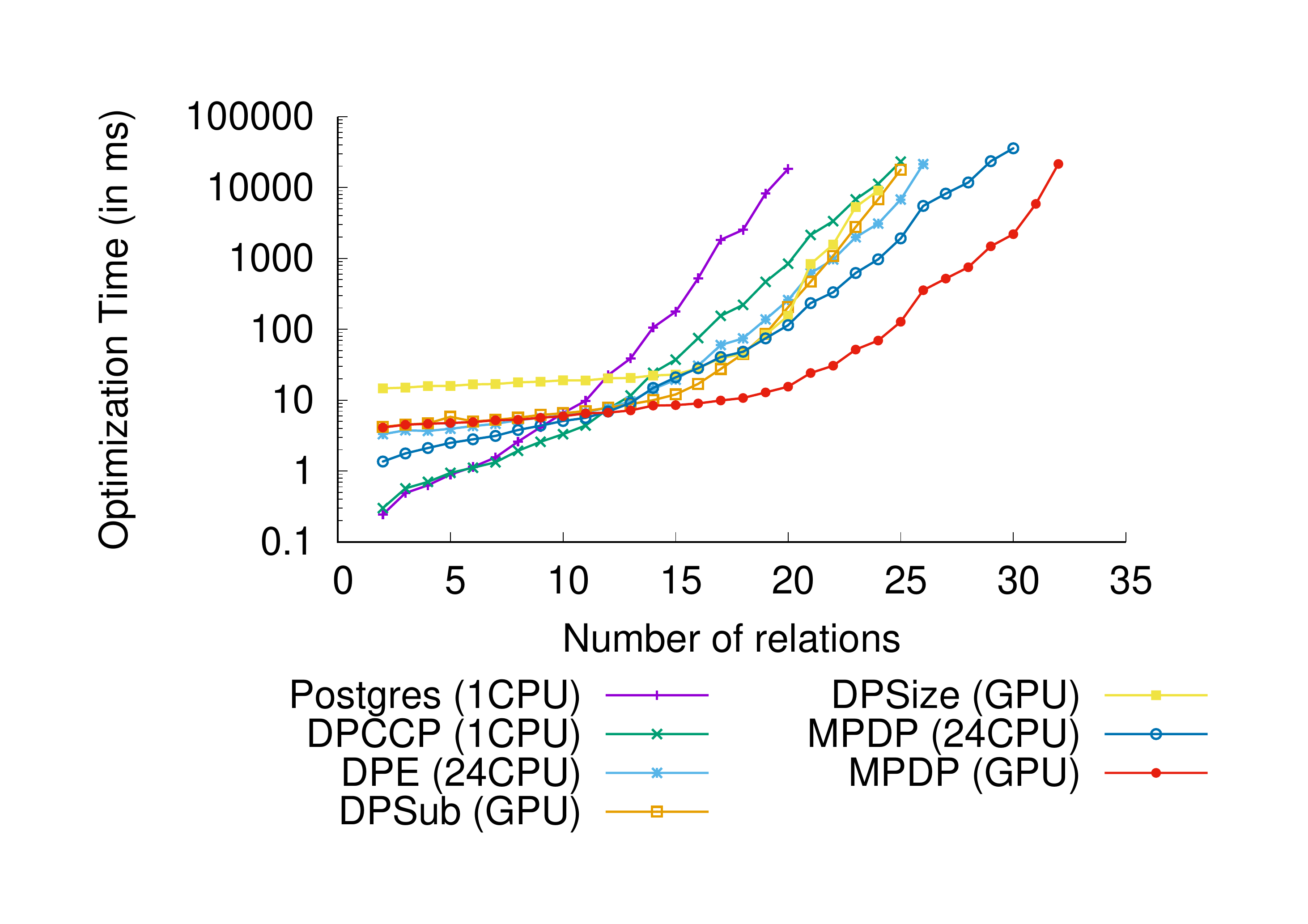}
\caption{Optimization times on snowflake graph
}
\label{fig:snowflake}
\end{figure}

\begin{figure}[t]
\centering
\includegraphics[width=0.35\textwidth,keepaspectratio=true]{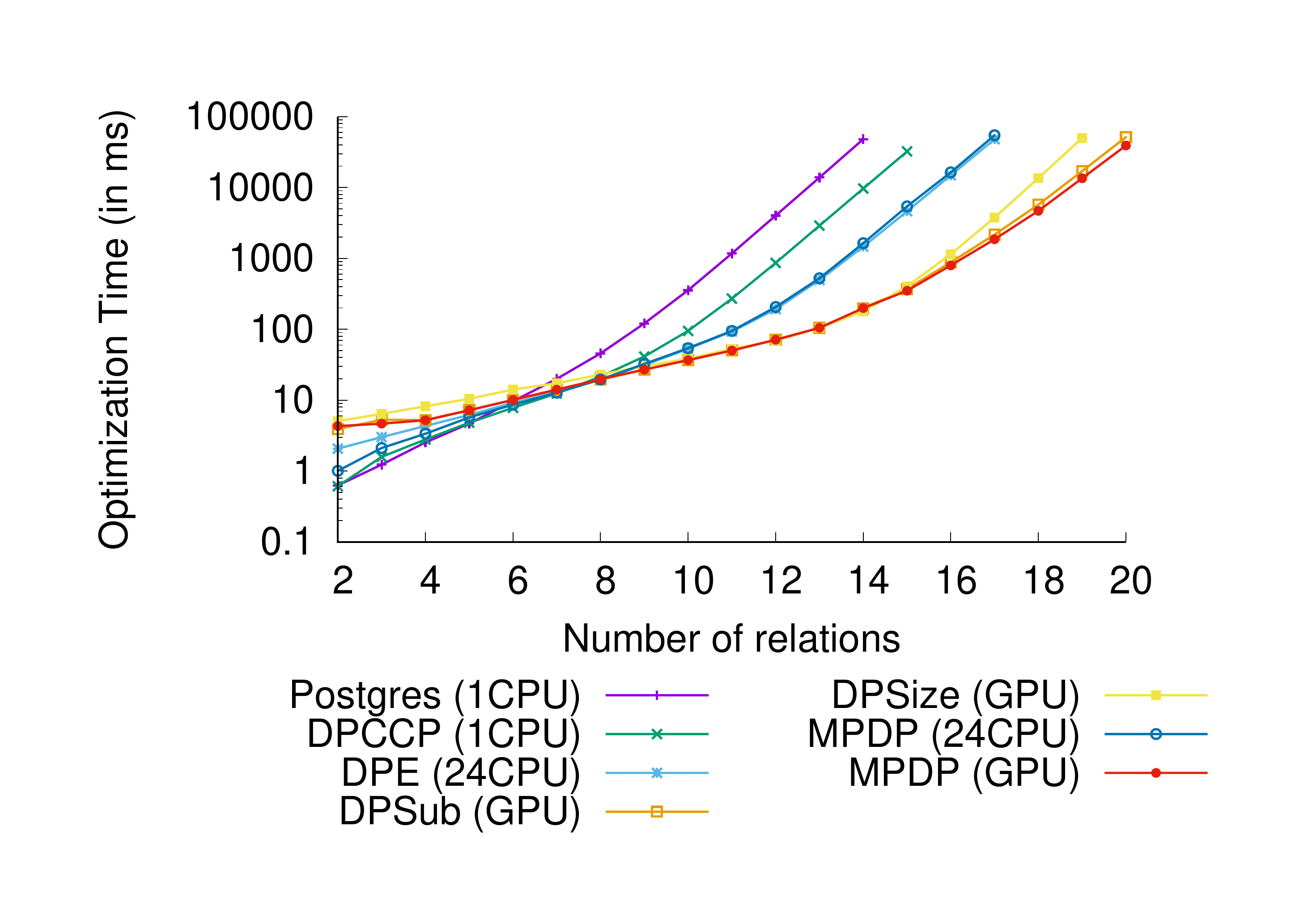}
\caption{Optimization times on clique graph
}
\label{fig:clique}
\end{figure}

\fullversion{
\begin{figure*}
\centering
\includegraphics[width=0.9\textwidth,keepaspectratio=true]{result/synthetic.pdf}
\caption{Optimization times on clique graph}
\label{fig:syn}
\end{figure*}
}

\subsubsection{\textbf{Synthetic Workload}}
In this section, we present our evaluation results on synthetic workloads. 
We generate queries with different type of join graphs and with different number of relations.\footnote{The equivalence classes introduced because of joins in the given query may change the join graph since they introduce implicit predicates. Our join graph also take into account the equivalence classes.} We consider the following types of join graphs: 
\begin{enumerate}[noitemsep,topsep=0pt,leftmargin=*]
    \item \textbf{Star join graph}: In this type of join graph there is a single fact relation to which other dimension relations join. \fullversion{This is a very common case with star-schema data warehouses queries that involve primary-key foreign key joins.} 
    \item \textbf{Snowflake join graph}: The join graph for these queries creates a snowflake pattern. The maximum depth we use is  4. 
    \item \textbf{Clique join graph} In this type of join graph all relations have joins to all other relations and the join graph is a clique. All {\joinpairs} in this case are valid {\joinpairs} (equivalently, capturing the cross join scenario), and hence join ordering for these graphs are more expensive to compute. 
\end{enumerate}

\new{Star and snowflake join graphs are very common scenario with analytics on data warehouse, while cliques are typically not as realistic which showcase the performance  when cross joins are considered.} For chain join graphs, only polynomial number of valid {\joinpairs} are present. The search space for join order optimization in such queries are much smaller, and we found that {\dpccp}, {\dpe} and {\mpdp (GPU)} were able to optimize 30 relation joins within 100 ms. Hence, we do not consider them in our evaluation. The optimization times for each of the above workload is presented next.

\fullversion{Figure~\ref{fig:chain} shows the optimization time for chain join graphs.  As can be seen from the graph, DPE and DPSize perform best in this case. For a chain query graph, MPDP is not able to efficiently parallelize the execution. However MPDP(GPU), even though it does relatively poorly in this case, can perform 35 relation join optimization in under 10 sec.}

\emph{Star Join Graph}:
 The optimization times for star join graphs are shown in Figure~\ref{fig:star}. The X-axis denotes the number of relations in the join query, while the Y-axis shows the optimization times in millisecs. 
As can be seen from the figure, {\mpdp} (GPU) outperforms all the baselines by at least an order of magnitude beyond 21 relations, and can generate the plan within 2s for 25 relations. Moreover, {\mpdp} (GPU) scales well with number of relations.

At 25 relations, {\mpdp} (GPU) is 17x faster than its 24CPU version because of the parallelism offered by GPUs. 
{\mpdp} (GPU) is 20x faster compared to {\dpsub} (GPU),  due to evaluating 2805 times fewer {\joinpairs}. The gap is even bigger compared to {\dpsize} (GPU) as it evaluates 12024x fewer  {\joinpairs} at 20 relations.
Also, {\mpdp} (GPU) has a speedup of at least 3 orders of magnitude over \texttt{postgres (1CPU)} from 16-relations, and a speedup of at least 2 orders of magnitude over \texttt{DPCCP (1CPU)} from 18-relations. {\dpe} (24CPU) takes over 70x longer to optimize queries with join of 23 relations compared to {\mpdp} (GPU).

\fullversion{and 230x faster than 1CPU\footnote{Not reported in the plots for sake of clarity.}.}

\emph{Snowflake Join Graph}: 
For snowflake schema, the results are shown in Figure~\ref{fig:snowflake}. 
For snowflake graphs too, {\mpdp} (GPU) outperforms all baselines  by at least an order of magnitude beyond 22 relations. In this case, {\mpdp} (GPU) can perform join optimization for 30 relations under 3s while other techniques timeout at 26 tables. Similar to star queries, the most noticeable difference is between {\mpdp (GPU)} and {\dpsub (GPU)}, with {\mpdp} (GPU) being 56x faster at 27 tables. This increase is due to the more efficient enumeration of {\joinpairs} by {\mpdp}.

\emph{Clique Join Graph}: 
The optimization times for clique join graphs is shown in Figure~\ref{fig:clique}. 
Since these join graphs are fully connected, pruning the search space is not feasible. Massive parallelization, however, still play a crucial role, with all GPU based algorithms outperforming all the CPU ones. Specifically, even {\dpsize} (GPU)  overtakes all the CPU algorithms, which was not the case in the previous two workloads. Furthermore, {\mpdp} (GPU) performs the best, closely followed by {\dpsub (GPU)}. However, {\dpsize (GPU)} is 3x slower at 19 tables, because {\dpsize} checks additional overlapping pairs, which are not enumerated by {\mpdp} (GPU) and {\dpsub} (GPU).

\fullversion{
, since {\mpdp} improvements have no effect in cliques, only providing additional overhead. 
Overall, it may seem that other techniques are closer to {\mpdp}, but it's important to note that they can be compared up to only 17 tables. {\mpdp (GPU)} speedup would be higher with more relations. 
Furthermore, differently from the other topologies, in clique DPSize (GPU) closely follows at around 2x, due to the total time being dominated by the joins, instead of the checks, as previously. }

\begin{figure}
\centering
\includegraphics[width=0.35\textwidth,keepaspectratio=true]{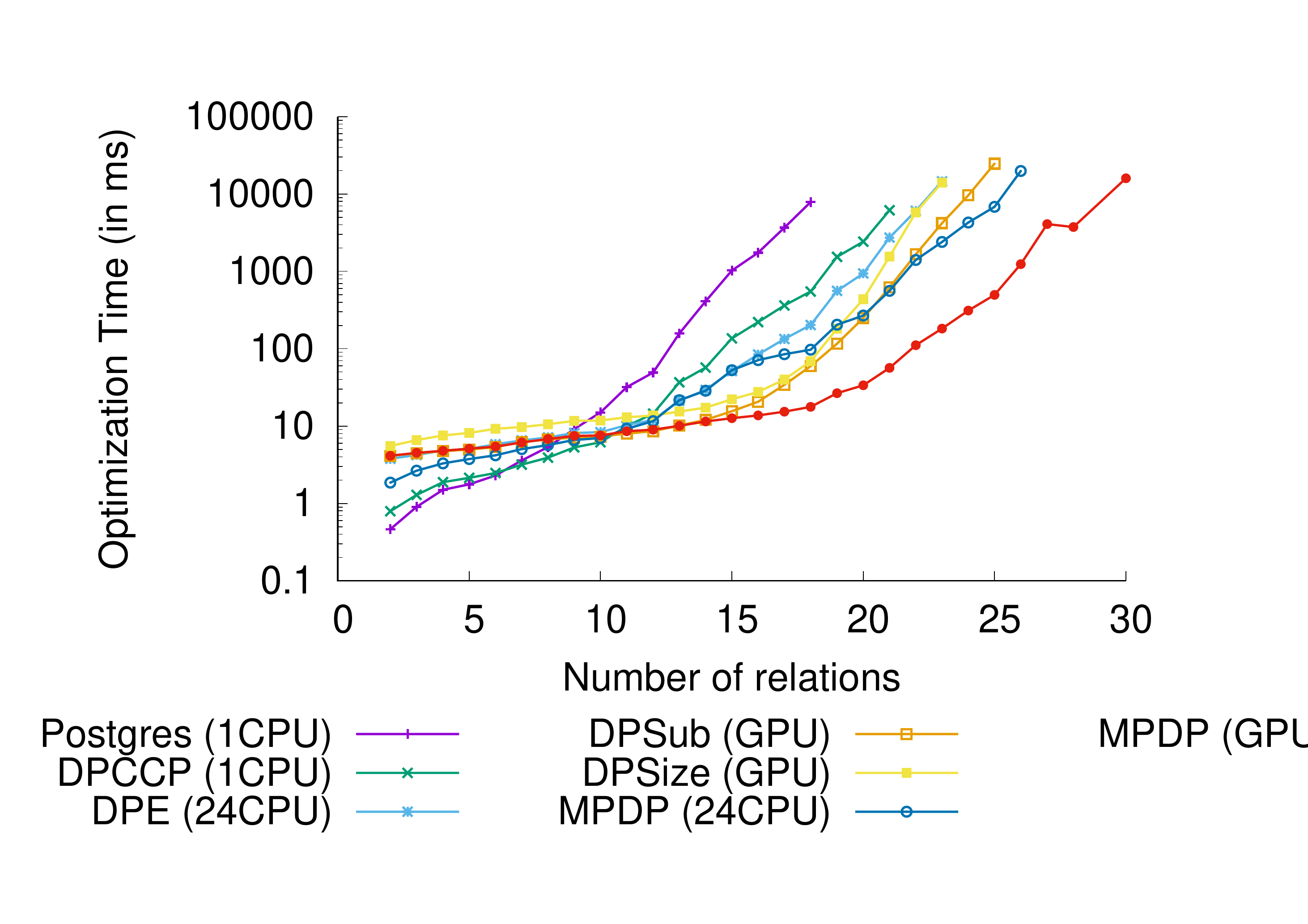}
\caption{Optimization times on MusicBrainz queries 
}
\label{fig:musicbrainz}
\end{figure}

\newcolumntype{?}{!{\vrule width 0.8pt}}

\begin{table*}
\caption{Heuristic cost comparison for snowflake schema}
\label{tab:approximate}
\new{
\resizebox{\textwidth}{!}{
\begin{tabular}{?p{2.61cm}?r|r?r|r?r|r?r|r?r|r? r|r?r|r?r|r?r|r?r|r?r|r? r|r?} \Xhline{2\arrayrulewidth}
\multirow{2}{1.4cm}{Technique/ \# tables} & \multicolumn{2}{c?}{30}  & \multicolumn{2}{c?}{40}  & \multicolumn{2}{c?}{50} &\multicolumn{2}{c?}{60}  & \multicolumn{2}{c?}{80}& \multicolumn{2}{c?}{100}  &\multicolumn{2}{c?}{200} & \multicolumn{2}{c?}{400} & \multicolumn{2}{c?}{500} & \multicolumn{2}{c?}{600} & \multicolumn{2}{c?}{800} & \multicolumn{2}{c?}{1000} \\\cline{2-25}
{}&  avg &  95\% &  avg &  95\% &  avg &  95\% &  avg &  95\% &  avg &  95\% &  avg &  95\% &  avg &  95\% &  avg &  95\% &  avg &  95\% &  avg &  95\% &  avg &  95\% &  avg &  95\%\\\Xhline{2\arrayrulewidth}
{GE-QO} & {1.9} & {2.3} & {2.1} & {2.5} & {2.2} & {2.8} & {2.4} & {2.8} & {2.4} & {3.0} & {2.5} & {3.1} & {3.1} & {3.8} & - & - & - & - & - & - & -& -& -& -\\\hline
{GOO} & {1.5} & {1.9} & {1.6} & {2.1} & {1.6} & {2.2} & {1.7} & {2.3} & {1.7} & {2.3} & {1.8} & {2.4} & {2.1} & {2.6} & {2.2} & {2.7} & {2.2} & {2.7} & 2.3 & 2.7 & 2.2 & 2.8 & {2.1} & {2.5}\\\hline
{LinDP} & {1.6} & {2.2} & {2.0} & {2.8} & {2.3} & {3.2} & {2.7} & {3.9} & {3.4} & {4.6} & {4.2} & {5.7} & {4.6} & {6.8} & {4.4} & {7.0} & {4.4} & {7.0}  & 4.0 & 6.6 & 3.2 & 5.5 & {3.0} & {5.3}\\\hline
{IKKBZ} & {3.5} & {4.5} & {4.4} & {5.6} & {5.4} & {7.0} & {6.3} & {8.0} & {8.2} & {10.0} & {10.1} & {12.5} & {18.2} & {21.8} & {32.1} & {37.6} & {38.4} & {44.2} & -& - & -& -& -& -\\\hline
{IDP$_2$-MPDP~(15)} & {1.2} & {1.5} & {1.3} & {1.7} & {1.4} & {1.8} & {1.4} & {1.8} & {1.5} & {1.9} & {1.5} & {1.9} & {1.7} & {2.2} & {1.7} & {2.2} & {1.7} & {2.2}  & 1.8 & 2.2  & 1.9 & 2.3 & {1.9} & {2.2}\\\hline
{IDP$_2$-MPDP~(25)} & {1.1} &  {1.5} & {1.2} & {1.6} & {1.3} & {1.8} & {1.4} & {1.8} & {1.4} & {1.8} & {1.5} & {1.9} & {1.6} & {2.0} & {1.7} & {2.2} & {1.7} & {2.0} & {1.7} & {2.2} & 1.7 & 2.2 & {1.7} & {2.2}\\\hline
{UnionDP-MPDP~(15)} & {\textbf{1.0}} & {\textbf{1.0}} & {\textbf{1.0}} & {\textbf{1.0}} & {\textbf{1.0}} & {\textbf{1.0}} & {\textbf{1.0}} & {\textbf{1.0}} & {\textbf{1.0}} & {\textbf{1.0}} & {\textbf{1.0}} & {\textbf{1.0}} & {\textbf{1.0}} & {\textbf{1.0}} & {\textbf{1.0}} & {\textbf{1.0}} & {\textbf{1.0}} & {\textbf{1.0}} & {\textbf{1.0}} & {\textbf{1.0}}  & {\textbf{1.0}} & {\textbf{1.0}} & {\textbf{1.0}} & {\textbf{1.0}}\\\Xhline{2\arrayrulewidth}
\end{tabular}
}}
\end{table*}

\begin{table*}
\caption{Heuristic cost comparison for star schema}
\label{tab:approximate:star}
\new{
\resizebox{0.85\textwidth}{!}{
\begin{tabular}{?p{2.61cm}?r|r?r|r?r|r?r|r?r|r? r|r?r|r?r|r?r|r?r|r?r|r? } \Xhline{2\arrayrulewidth}
\multirow{2}{1.4cm}{Technique/ \# tables} & \multicolumn{2}{c?}{30}  & \multicolumn{2}{c?}{40} & \multicolumn{2}{c?}{50}  & \multicolumn{2}{c?}{60}  & \multicolumn{2}{c?}{80} &\multicolumn{2}{c?}{100}  & \multicolumn{2}{c?}{200}& \multicolumn{2}{c?}{300}  &\multicolumn{2}{c?}{400} & \multicolumn{2}{c?}{500}& \multicolumn{2}{c?}{600} \\\cline{2-23}
{}&  avg &  95\% &  avg &  95\% &  avg &  95\% &  avg &  95\% &  avg &  95\% &  avg &  95\% &  avg &  95\% &  avg &  95\% &  avg &  95\% &  avg &  95\% &  avg &  95\% \\\Xhline{2\arrayrulewidth}
{GE-QO}  & 1.2 & 1.5 & 1.3 & 1.6 & 1.4 & 1.9 & 1.4 & 1.7 & 1.4 & 1.8 & 1.3 & 1.7 & 1.3 & 1.6 &- &- & - & - &-  & - &-  &- \\\hline
{GOO} & 1.4 & 2.3 & 1.6 & 2.6 & 1.7 & 2.8 & 1.7 & 2.8 & 1.7 & 2.9 & 1.7 & 2.9 & 1.5 & 2.6 & 1.6 & 2.9 & 1.7 & 2.9 & 1.6 & 2.9 & 1.6 & 2.9\\\hline
{LinDP} & 1.4 & 2.3 & 1.6 & 2.6 & 1.6 & 2.8 & 1.7 & 2.8 & 1.6 & 2.9 & 1.7 & 3.0 & 1.5 & 2.6 & 1.6 & 2.9 & 1.7 & 2.9 & 1.6 & 2.8 & 1.6 & 2.9\\\hline
{IKKBZ} & 1.4 & 2.3 & 1.6 & 2.6 & 1.7 & 2.8 & 1.7 & 2.8 & 1.7 & 2.9 & 1.7 & 2.9 & 1.5 & 2.6 & 1.6 & 2.9 & 1.7 & 2.9 & 1.6 & 2.8 &- &-\\\hline
{IDP$_2$-MPDP~(15)} &\textbf{ 1.0} & \textbf{1.0}  &\textbf{1.0} & \textbf{1.0} & \textbf{1.0} & \textbf{1.0}  & \textbf{1.0} & \textbf{1.0} & \textbf{1.0} & \textbf{1.0} & \textbf{1.0} & \textbf{1.0} & \textbf{1.0} & \textbf{1.0} & \textbf{1.0} & \textbf{1.0} & \textbf{1.0} & \textbf{1.0} & \textbf{1.0} & \textbf{1.0} & \textbf{1.0} & \textbf{1.0}\\\hline
{IDP$_2$-MPDP~(25)} &\textbf{ 1.0} & \textbf{1.0}  &\textbf{1.0} & \textbf{1.0} & \textbf{1.0} & \textbf{1.0}  & \textbf{1.0} & \textbf{1.0} & \textbf{1.0} & \textbf{1.0} & \textbf{1.0} & \textbf{1.0} & -&  -& -& - &-  & -&-  & - & - &- \\\hline
{UnionDP-MPDP~(15)} &\textbf{1.0} & \textbf{1.0}  & \textbf{1.0} & \textbf{1.0} & \textbf{1.0} & \textbf{1.0} & \textbf{1.0} & \textbf{1.0} & \textbf{1.0} & \textbf{1.0} & \textbf{1.0} & \textbf{1.0} & \textbf{1.0} & \textbf{1.0} & \textbf{1.0} & \textbf{1.0} & \textbf{1.0} & \textbf{1.0} & \textbf{1.0} & \textbf{1.0} &- &-\\
\Xhline{2\arrayrulewidth}
\end{tabular}
}}
\end{table*}

\subsubsection{\textbf{Real-world Workload}}
\label{sec:expt:real}
We evaluate our techniques on a real world  {\musicbrainz} dataset~\cite{musicbrainz}. 
This database, consisting of 56 tables, include information about artists, release groups, releases, recordings, works, and labels in the music industry. Since we do not have access  to query logs, we generate our own queries. We  only consider widely used primary key - foreign key joins. We pick a relation at random and then do a random walk on the graph till we get the required number of rels, $n$. For any given number of relation, $n$, we generate 15 such queries and  report its average values. 
Note that the generated queries can contain cycles.
\eat{We generate two types of join graphs on the  MusicBrainz dataset, one with cycles and one without cycles. To generate join graphs without cycles we take the join graph with cycles and extract a minimum spanning tree version of it provided by Postgres. }

The results for the join order optimization on MusicBrainz dataset is shown in Figure~\ref{fig:musicbrainz} and is formatted similar to the graphs for the synthetic datasets. 
Again, {\mpdp (GPU)} outperforms  all the baselines at least by an order of magnitude beyond 24 relations. {\mpdp (GPU)} can optimize a 30 join query within 45s. 
For 26 tables, the {\mpdp (GPU)} is 14x faster than its CPU counterpart (24 threads)
and 19x faster than {\dpsub (GPU)}. 
The outperformance over {\mpdp} (24CPU) is due to the increased parallelism provided by GPUs, while over {\dpsub (GPU)} is because of evaluating fewer invalid {\joinpairs}. {Also {\mpdp} (GPU) is 80 times faster than \texttt{DPE (24CPU)} for joins with 23 rels.}

{\dpsize}-based algorithms do not perform well due to checking too many overlapping pairs, with {\dpsize (GPU)} doing better than \texttt{DPE (24CPU)} for sizes between 13 and 22, only due to the higher computational power at disposal, before its optimization time becomes dominated by checking invalid {\joinpairs}.

\subsubsection{Comparison of Optimization and Execution Times}
\label{sec:opt-exec-times}
\begin{figure}[t]
\begin{center}
 \subfloat[Primary Key - Foreign Key Joins]{\includegraphics[width=0.2\textwidth, height = 2 cm]{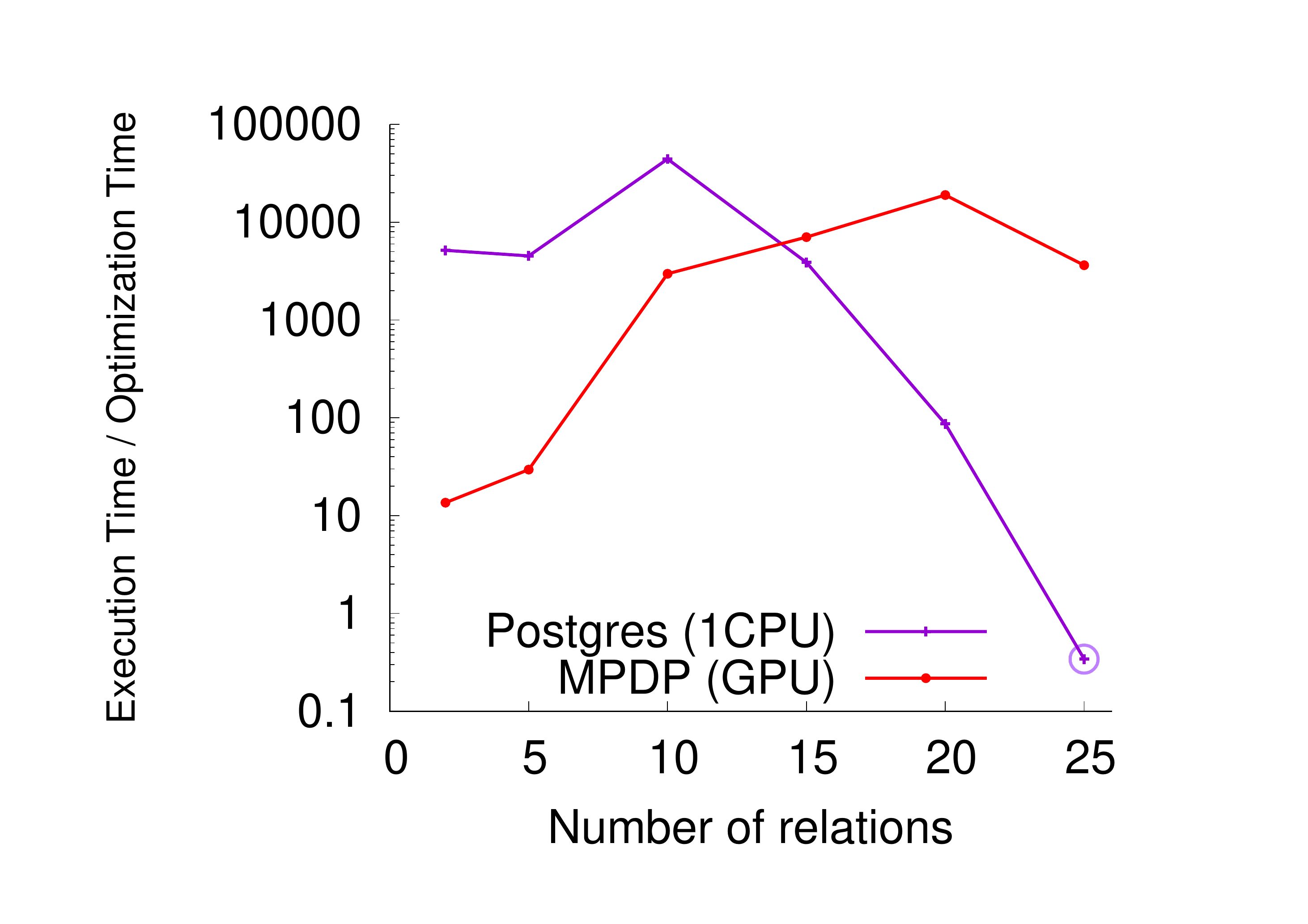}}
\quad
 \subfloat[Non Primary Key - Foreign Key Joins]{\includegraphics[width=0.2\textwidth, height = 2 cm]{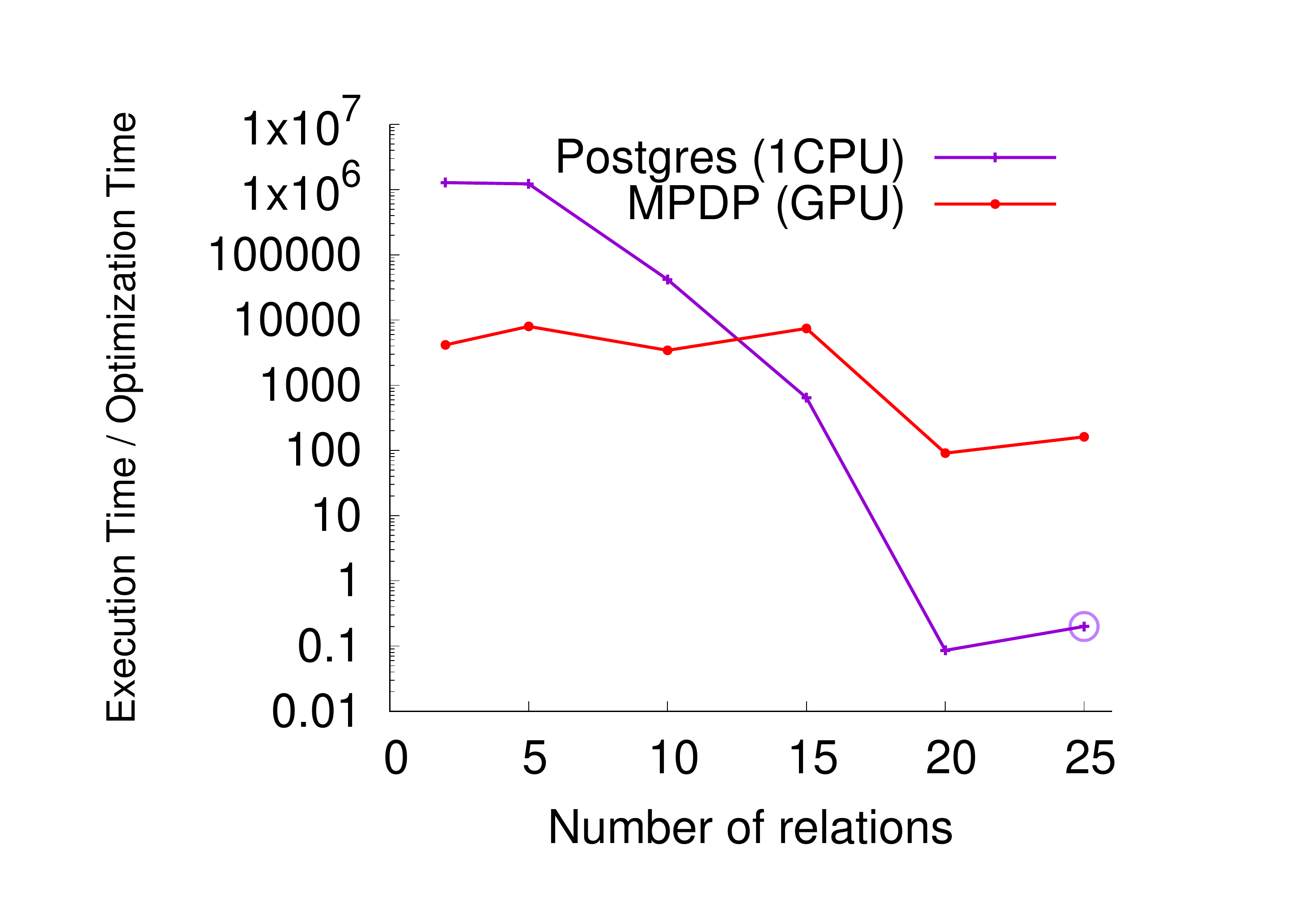}}
 \caption{Ratio of Execution and Optimization Times}
 \label{fig:opt-exec}
 \end{center}
\end{figure}

\revise{To evaluate the significance of optimization time for large queries, we compare the execution time and the optimization time for queries on the MusicBrainz dataset. Note that the execution times also depend on the size of the dataset,  query predicates and the execution environment (eg: number of machines). Recall that our experimental setup is limited to a single machine. Primary key-foreign key (PK-FK) joins and non PF-FK joins have different execution time characteristics. Hence we use non PK-FK joins also for this experiment. 
Figure~\ref{fig:opt-exec} captures the ratio of execution vs optimization times averaged over 10 queries for each relation size. 

The results show that with the {\postgres} optimizer, the optimization time is a significant portion of the total query processing time (i.e. optimization + execution) for large queries. 
For both PK-FK and non-PK-FK scenarios, for 25 relations, the {\postgres} optimization timed out for all queries (with a timeout of 3 hours). We conservatively set its optimization time to the timeout value. In this case, the query execution, given the optimal plan, finishes in a fraction of the timeout value. The same is not true with {\mpdp} (GPU) since the optimization time is much less as compared to {\postgres}. 
Thus, the experiment demonstrates that  the savings in optimization time achieved by {\mpdp} is highly beneficial, especially for joins with large number of relations. }

\new{
\subsubsection{JOB benchmark}
	\label{sec:expt:job}
We now present the optimization time for queries from the Join order benchmark (JOB)~\cite{Leis2016}. While JOB does not have many queries with large number of joins, (with the largest query involving a 17 relation join) it is the only benchmark we could find with at least some queries with large number of joins. Further, as pointed out in \cite{Neumann2018}, JOB mainly stresses on quality of cardinality estimation. 
	
The results on JOB for various optimization techniques are shown in Figure~\ref{fig:job}. 
{\mpdp} (GPU) starts outperforming others from around 12 relations, while closely followed by {\dpsub} (GPU). The performance gap between {\mpdp} and {\dpsub} increases with more relations, with {\mpdp} (GPU) being 2.3 times faster at 17 relations.

\begin{figure}
\centering
\includegraphics[width=0.35\textwidth,keepaspectratio=true]{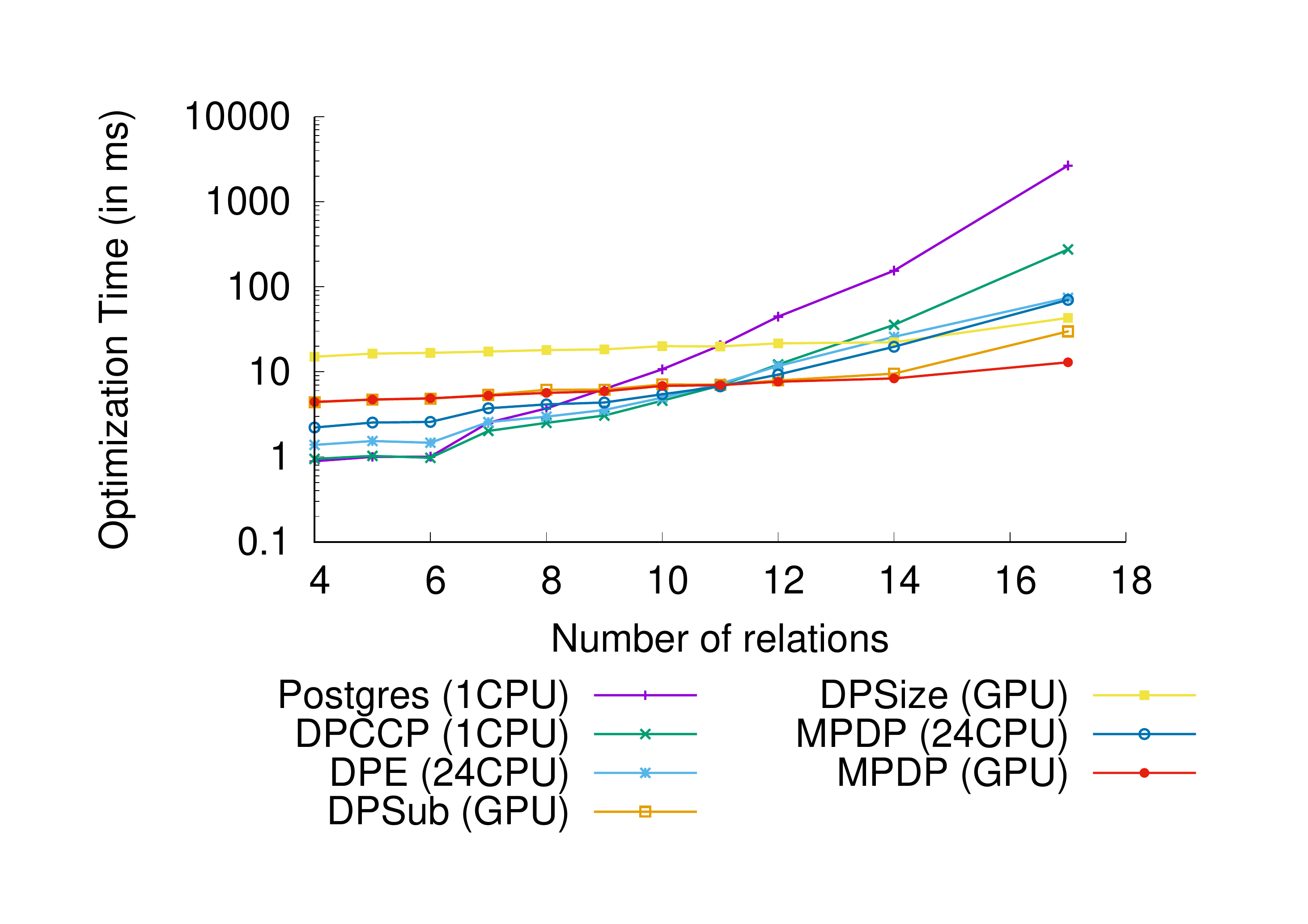}
\caption{JOB query optimization times}
\label{fig:job}
\end{figure}
}

\new{
\subsubsection{Impact of GPU Implementation Enhancements}
\label{sec:gpu_enhancements}

There are primarily two enhancements over \cite{Meister2020} (Section~\ref{sec:gpu_algo}):
    
    1) \emph{Reducing global memory writes through Kernel fusion}, whose  improvement depends on complexity of cost function, and yield up to 40\% improvement on {\mpdp}. 
    
    2) \emph{Collaborate Context Collection} (CCC), whose impact depends on graph topology, achieve up to 3X improvement with {\mpdp}.  

}

\subsection{Heuristic Solution Evaluation}
\label{sec:expt:heuristic}
The optimization time using {\mpdp}, although much better than other techniques, rises exponentially with the number of relations.  In order to evaluate larger join sizes than what would be feasible using our techniques, we presented  heuristics in Section~\ref{sec:approx}. In this experiment, we evaluate our IDP$_2$ based heuristic and {\uniondp}  with other heuristics based on 
\fullversion{ two criteria: (a) the optimization time to get the join order; (b) }
the quality of the plan produced (i.e. total cost) using our {\postgres}-like cost model.

Note that we do not compare the actual query execution times but only compare the costs since the actual execution time may be different from what is estimated due to errors in cost and cardinality models.  
Handling those errors are beyond the scope of this paper. 

In this set of experiments, we compare our optimization techniques with the following heuristic techniques:
\begin{itemize}[noitemsep,topsep=0pt,leftmargin=*]
    \item \texttt{GE-QO}: Genetic algorithm based optimization used in \linebreak PostgreSQL~\cite{Whitley1994}. We use the default parameters. 
    \item \texttt{GOO}~\cite{Fegaras1998}: Greedy Operator Order which uses the resulting join relation size to greedily pick the best join at each step.
    \item \texttt{IKKBZ}~\cite{ikkbz1,ikkbz2}: Technique that finds the best left-deep tree, which is also used in linearized DP. It uses the $C_{out}$ cost function to estimate the best left-deep join order.
    \item \texttt{LinDP}~\cite{Neumann2018}: Adaptive optimization technique, which chooses among {\dpccp}, linearized DP and IDP$_2$ using GOO and the linearized DP depending on query size.
    The linearized DP is a novel technique that optimizes the left-deep plan found by IKKBZ in polynomial time.
    The default thresholds presented in the original paper have been used.
\end{itemize}

\new{For all IDP2 variants, we use GOO (Greedy Operator Ordering) for the heuristic step. }
\new{We evaluated IDP$_2$ for $k =5, 10, 15, 20, 25$. Due to space limitations, we just present its median (i.e. 15) and maximum value (i.e. 25). As we increase $k$ the plan quality increases. For instance, IDP$_2$-MPDP for 30 rels, has normalized cost values 1.4, 1.27, 1.23, 1.17 and 1.14, for $k=5, 10, 15, 20, 25$, respectively. Further, higher values of $k$ (i.e., > 25) can be chosen with larger timeouts. For {\uniondp}, we use $k=15$ since plan quality were similar with  $k=25$, while running much faster.}

\new{
We use the snowflake and star synthetic schema to evaluate the approximation heuristics. We also ran experiments on clique join graphs. The snowflake schema is the most likely one to be used in analytical queries for such large queries.
We only consider primary key - foreign key joins. For the star schema, we generate queries with selections so that different join orders would result in different costs of intermediate joins.}
In order to get statistically significant results to compare the costs, we generate 100 queries for each join relation size that we consider in the heuristic optimization techniques. We also set the optimization timeout to 1 min. 
We do not use the {\musicbrainz} database as it has only 56 tables.

The relative execution cost, for the snowflake schema, as given by the cost model is shown in Table~\ref{tab:approximate}. For each query, we set the cost of the best plan found by any algorithm to 1 and find the relative cost of other techniques with respect to the best plan. We show the average relative cost and the 95th percentile of the relative cost measured in this manner across 100 queries. The best relative costs for each case are marked in bold.

\new{
As shown in the table, \texttt{UnionDP-MPDP~(15)}  provide the best query plans across all join sizes. This is because {\uniondp} can easily create partitions by removing single expensive edges.  \texttt{IDP$_2$-MPDP~(25)} performs the second best. We also see that there is some advantage in using a bigger value of $k$ for \texttt{IDP$_2$-MPDP~(25)}. 
The genetic optimization used by {\postgres} produces plans that are on average 2-4x more expensive than the best-found plan, also, it timeouts after 200 rels.
\texttt{UnionDP-MPDP~(15)} produced plans that were 1.5-2.3x times cheaper than \texttt{GOO} on average. 
  
\texttt{IKKBZ} scales poorly with number of rels, going up to a factor of 38.4 (on average) at 500 tables. This is \eat{probably} because \texttt{IKKBZ} only considers left-deep plans.  \texttt{LinDP} achieves 1.6-4.6x worse plans in average compared to \texttt{UnionDP-MPDP~(15)} while the 95-percentile goes up to 7x. 
}

\new{For star join graphs, the results are shown in Table~\ref{tab:approximate:star}. Both our techniques, produce the cheapest query plans which are much better than other techniques. 
For instance at 100 relations, GE-QO, GOO, LinDP and IKKBZ  are 1.3-1.7x costlier than that of \texttt{IDP$_2$-MPDP} and \texttt{UnionDP-MPDP} on average. Compared to the case of snowflake schema, IKKBZ does not perform as bad  since  the optimal join order also falls in IKKBZ's search space of left-deep plans.  

\fullversion{\texttt{IDP$_2$-MPDP~(15)}
and \texttt{IDP$_2$-MPDP~(25)} marginally outperform \texttt{UnionDP-MPDP~(15)} by 10\%. Here, {\uniondp} is not able to find partitions without removing many nodes since all dimension tables join to a single fact table. \texttt{IDP$_2$-MPDP~(25)} times out at 100 relations while \texttt{IDP$_2$-MPDP~(15)} times out at 500 relations.
}

For large uncommon clique join graphs, we summarize the evaluation results in the interest of space. 
All techniques time out much earlier (compared to snowflake) with {\lindp} at 70 rels, while GOO, IDP$_2$-MPDP and UnionDP-MPDP at around 100 rels. Here, IDP$_2$-MPDP has the best performance. However, GOO produce up to 2x lesser quality plans compared to IDP$_2$-MPDP. {\uniondp} does not perform that well because it has too many edges which creates an issue for balancing partition size and maximizing cut edges.  
 } 

\fullversion{
\begin{figure}
\centering
\includegraphics[width=0.45\textwidth,keepaspectratio=true]{result/approximate.pdf}
\caption{Snowflake Graph Heuristic: Optimization Time}
\label{fig:approximate}
\end{figure}
}

\subsection{Scalability on CPU}
\label{sec:exp:scalability}
Now we will see how our solution scales with CPU threads.  We use a 20 relation query on MusicBrainz and vary the number of threads  from 1 to 24. The results are similar for other relation sizes.
The scalability factor also depends on the cost function complexity (also captured~\cite{Meister2017}). We use the cost function as described earlier  to get results representative of a real world scenario. 

The scalability results 
 are shown in Figure~\ref{fig:scalability}.
The X-axis represents the total number of threads used, while the Y-axis represents the speedup with respect to a single thread execution of the corresponding algorithm. 
{\mpdp} scales much better  compared to DPE. This is because DPE cannot parallelize the candidate join pairs enumeration, but can only evaluate join costs in parallel. Further, {\mpdp} scales sub-linearly beyond 6 threads since the CPU caches get swapped out when many join pairs are evaluated in parallel.

\begin{figure}
\centering
\includegraphics[width=0.3\textwidth,keepaspectratio=true]{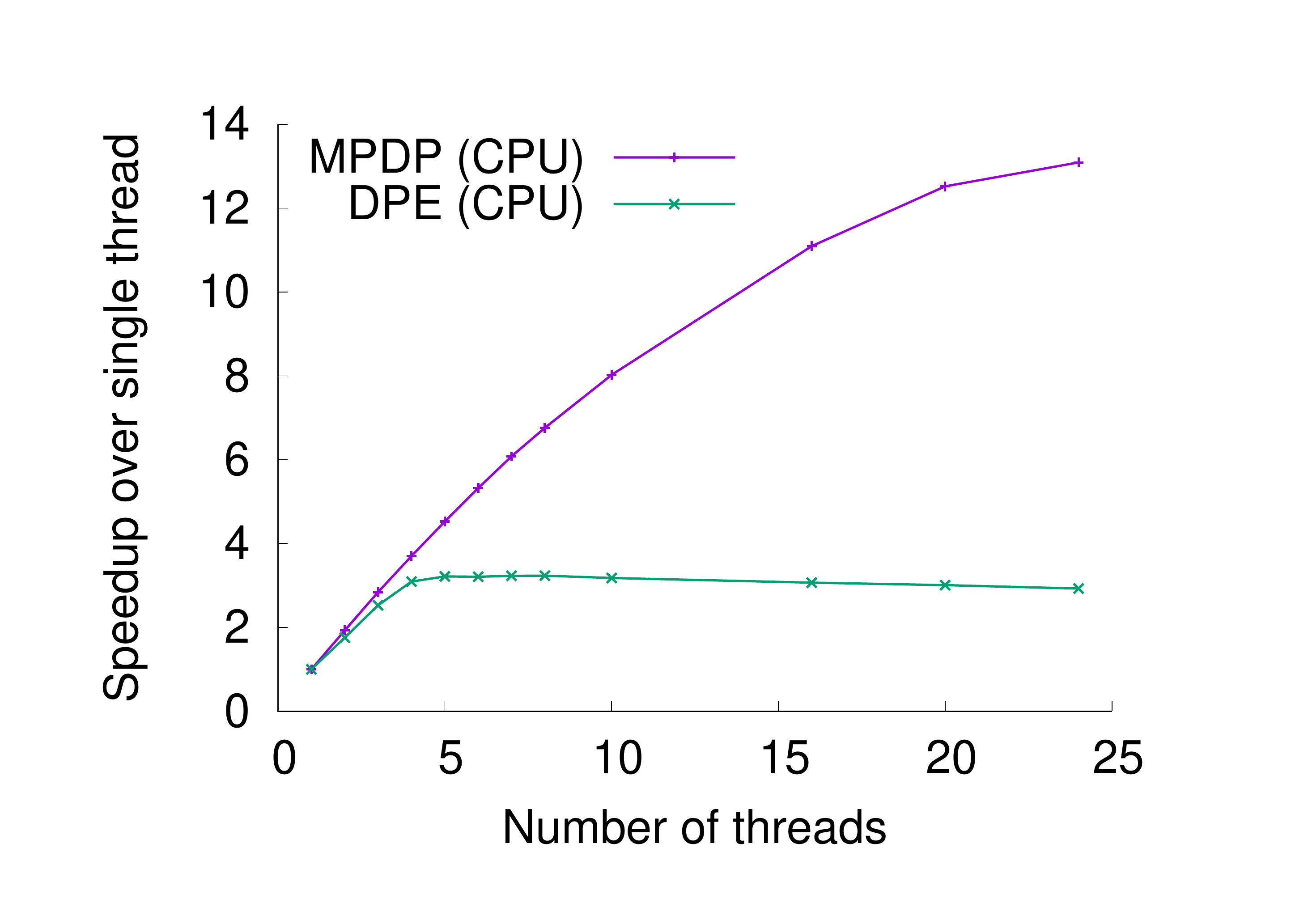}
\caption{CPU Scalability on MusicBrainz. }
\label{fig:scalability}
\end{figure}

\fullversion{
\begin{figure}
\centering
\includegraphics[width=0.35\textwidth,keepaspectratio=true]{result/scalability_star.pdf}
\caption{CPU Scalability on star queries. MPDP (CPU) scales much better than DPE (CPU)}
\label{fig:scalability_star}
\end{figure}
}

\new{
\subsection{Optimization cost comparison} 
\label{sec:expt:aws}
While the above experiments provides analysis between different CPU and GPU implementations of different algorithms, the monetary cost of using different  hardware may be different. In this experiment, we compare the cost of optimization of these techniques while using Amazon AWS cloud instances. 

Since the CPU algorithms do not scale linearly with large number of cores, we experimented with different AWS size instances and picked the one that is the most cost effective. For the single threaded CPU algorithms, {\dpccp} and Postgres optimizer, we used a  \textit{c5.large} instance which has 2 vCPU cores and 4 GiB of memory. For {\dpe} and MPDP (CPU) we used a \textit{c5.xlarge} instance which has 4 vCPU cores and 8 GiB of memory. For GPU based algorithms, we used a \textit{g4dn.xlarge} instance which has an NVIDIA T4 GPU.

The result of the experiment is shown in Figure~\ref{fig:aws}. The X-axis shows the number of relations, while the Y-axis shows the cost of optimization for a single query in U.S. cents. We obtained the cost by multiplying the time taken for the optimization with the amount paid for running the instance per unit time. For smaller queries, {\postgres}  and {\dpccp} are cheaper. However, for larger queries (beyond 15 relations), {\mpdp}\texttt{ (GPU)} turns out to be the cheapest. For instance, {\mpdp} is around an order of magnitude cheaper compared to next best {\dpe} from 23 relations.  

\begin{figure}
\centering
\includegraphics[width=0.35\textwidth,keepaspectratio=true]{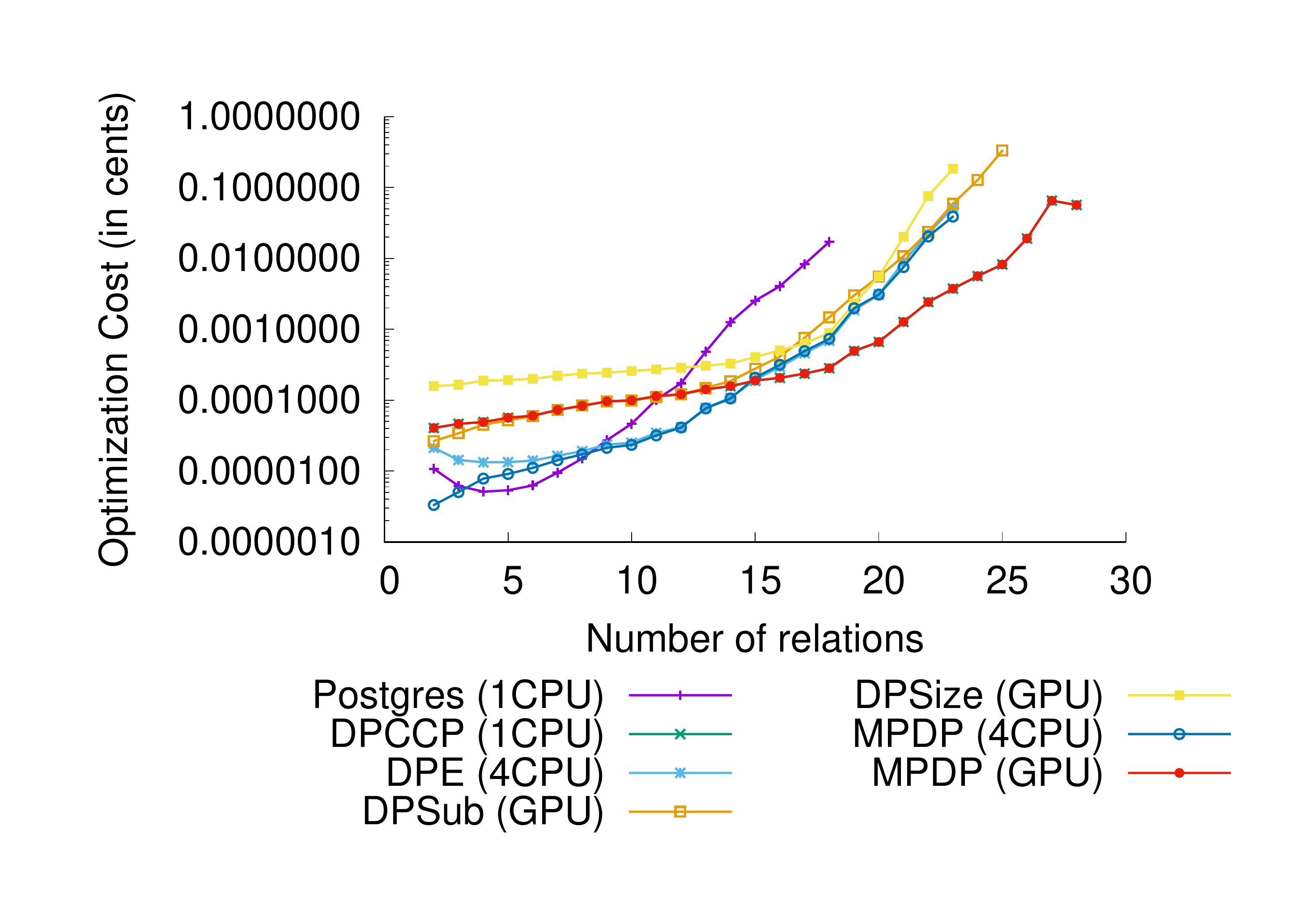}
\caption{Cost of optimization on AWS}
\label{fig:aws}
\end{figure}
}
\section{Conclusions}
\label{sec:conc}
In this paper, we described techniques for join order optimization for queries with large number of joins. Our query optimization technique is capable of running in parallel, while significantly pruning the search space  and can be efficiently implemented on GPUs too. Our experiments, in case of exact scenario, show that our techniques significantly outperform other state-of-the-art techniques  in terms of query optimization time. Our heuristic solutions allow us to efficiently explore a larger search space for very large join queries  (eg: 1000 rels), thereby allowing us to find plans with much better costs compared to state-of-the-art heuristic techniques. Areas of future work may include, using our optimization framework for scenarios with increased optimization search space such as  cloud analytics, graph analytics and bigdata systems.

\bibliographystyle{ACM-Reference-Format}
\bibliography{references}

\pagebreak
\end{document}